\documentclass[a4paper, onecolumn, 10pt, unpublished]{quantumarticle}
\pdfoutput=1
\usepackage{mathtools,amssymb}
\usepackage{amsthm}
\usepackage{nicematrix}
\usepackage[dvipsnames]{xcolor}
\usepackage{caption}
\usepackage{subcaption}
\usepackage{tikz}
\usetikzlibrary{cd}
\usepackage{stmaryrd}
\usepackage{braket}
\usepackage{bbm}
\usepackage{dsfont}
\usepackage{thm-restate}
\usepackage{quiver}

\newenvironment{proofsketch}{%
  \begin{proof}%
}{%
  \end{proof}%
}

\usepackage{graphicx}
\graphicspath{{./figures/}}

\usepackage[citestyle=numeric-comp, sorting=none]{biblatex}
\usepackage{hyperref}
\hypersetup{
    colorlinks=true,
    linkcolor=blue,
    citecolor=magenta,
    filecolor=magenta,
    urlcolor=blue
    }
\usepackage{cleveref}

\DeclareMathOperator{\im}{im}
\DeclareMathOperator{\supp}{supp}
\DeclareMathOperator{\nnz}{nnz}

\theoremstyle{plain}
\newtheorem{theorem}{Theorem}[section]
\newtheorem{lemma}[theorem]{Lemma}
\newtheorem{corollary}[theorem]{Corollary}

\theoremstyle{definition}
\newtheorem{definition}[theorem]{Definition}

\theoremstyle{remark}
\newtheorem{remark}[theorem]{Remark}

\newcommand{\F}{\mathbb{F}}

\newcommand{\cone}{{\rm cone}}
\newcommand{\Z}{\mathbb{Z}}
\newcommand{\N}{\mathbb{N}}

\newcommand{\dist}{\mathrm{dist}}
\newcommand{\SAT}{\mathbf{SAT}}

\usepackage{mdframed}

\newmdenv[
    backgroundcolor=blue!5,
    linecolor=blue!50!black
]{disclaimer}

\begin{document}
\title{Near-optimal high-rate surgery from linear PCPPs}
\author{Alexander Cowtan}
\email{alexander.cowtan@xanadu.ai}
\affiliation{Xanadu, Toronto, ON M5G 2C8, Canada}
\author{Benjamin Ide}
\affiliation{Xanadu, Toronto, ON M5G 2C8, Canada}

\maketitle
\abstract{A central problem for surgery with quantum  Low-Density Parity Check (LDPC) codes is the design of auxiliary systems which measure large sets of logical operators in parallel, while preserving sparsity and fault distance.
We introduce a very general method for designing surgery gadgets which are ‘high rate’, meaning that the number of operators measured in parallel is large in comparison to the size of the auxiliary system. Given an arbitrary initial $\llbracket n, k, d \rrbracket$ quantum LDPC Calderbank-Shor-Steane (CSS) code, and an arbitrary subcode of size $\mu \leq n$ that contains $t \leq k$ logical qubits, the method produces a sparse high-rate surgery gadget with size $\mu (\log\mu)^{\mathcal{O}(\log\log \mu)}=\mu^{1+o(1)}$ which measures all $t$ logicals in the subcode. The space overhead is asymptotically optimal up to subpolynomial factors, as the lower bound is $\Omega(\mu)$. This gadget is produced in time polynomial in $\mu$. When logical measurement is performed using these surgery gadgets, the phenomenological fault distance is at least $d$ when performed for $d$ rounds.

Our main result comes from relating surgery gadgets with relative cosystolic expansion to linear Probabilistically Checkable Proofs of Proximity (PCPPs), which allow a randomised verifier to probabilistically verify the input to a linear circuit, using only oracle access to the input and a claimed proof.}

\section{Introduction}

Quantum error correction (QEC) is vital for large-scale, fault-tolerant quantum computation~\cite{shor1995scheme,gottesman1997stabilizer,Kitaev1997qec}. One modern approach to quantum error correction relies on the high encoding rates of quantum Low-Density Parity Check (LDPC) codes~\cite{leverrier2015quantum,breuckmann2017hyperbolic,breuckmann2021quantum,panteleev2021degenerate}. However, while quantum LDPC codes can make excellent quantum memories~\cite{bhardwaj2026high, bravyi2024high}, computing with them is complicated, and can incur high overheads, in space or in time~\cite{cohen2022low,yoder2025tour}.

Lattice surgery on surface codes~\cite{Horsman2012LatticeSurgery} has been generalised to LDPC codes~\cite{cohen2022low,cowtan2024css,williamson2026low,ide2025fault}, promising low-overhead, addressable fault-tolerant quantum computation by logical measurements~\cite{litinski2019game}. LDPC code surgery works by deforming the original code into a series of other codes with enlarged stabiliser groups, thereby measuring logical operators over time.

However, managing the time- and space-overhead tradeoffs in LDPC code surgery is challenging. Past work has tended to either result in very space-efficient but time-inefficient (in the worst case) constructions~\cite{yoder2025tour,he2025extractors}, or vice versa~\cite{baspin2025fast}. 

The lowest previous asymptotic spacetime cost depends slightly on the scenario. Say we start with an $\llbracket n, k, d\rrbracket$ code. If one has an arbitrary subcode of size $\mu$ containing $t$ logicals, with weight at most $\mathfrak{w}$, and wants to measure them all simultaneously, then this can be achieved in $\tilde{\mathcal{O}}(t\mathfrak{w})$ space (on top of the original code), and $\mathcal{O}(d)$ time by `branching' and using expander graphs~\cite{zhang2025time, cowtan2026parallel}. Alternatively, they can be measured in $\mathcal{O}(\mu d)$ space~\cite{zhang2025time,zheng2025high} and $\mathcal{O}(d)$ time by `thickening'; given certain special properties of the subcode, such as soundness, this can be reduced~\cite{cowtan2025fast,zheng2025high}, but such special properties are hard to find, and hard even to compute when they exist.

For time reduction, fast surgery (that is, surgery which can be done in fewer than $\mathcal{O}(d)$ rounds) has been studied~\cite{baspin2025fast,cowtan2025fast,chang2026constant}; but this currently either comes with a formidable $\mathcal{\tilde{O}}(nkd)$ space cost~\cite{baspin2025fast} or cannot address an arbitrary choice of subcode~\cite{cowtan2025fast,chang2026constant,hillmann2025single}.

There is a clear lower bound to the space cost of measuring a subcode using surgery. To preserve the LDPC property, any auxiliary system must connect sparsely to the original subcode, and so the size of that system must scale as $\Omega(\mu)$, where $\mu$ is the size of the subcode. \footnote{This does not mean that $\Omega(\mu)$ is a lower bound on performing an equivalent logical measurement; if a smaller subcode which contains the same logicals exists, then the overhead can be reduced. Here, we assume that the subcode to be measured is dictated in advance.}
In this work, we present a procedure for measuring a subcode of size $\mu$ in $\mu^{1+o(1)}$ space cost, and $\mathcal{O}(d)$ time cost. The code remains sparse at all times, and the fault distance is at least $d$. The procedure works for any LDPC CSS code, and does not demand any special properties of the original code, or the subcode being measured. Our method is asymptotically much more efficient than using branching and expander graphs when the subcode being measured has densely packed logicals, that is $t\mathfrak{w} \gg \mu$. Our method is asymptotically more efficient than thickening unless the distance $d = \mu^{o(1)} \leq n^{o(1)}$, i.e. the distance scales subpolynomially in the size of the subcode, and therefore the initial code. For example, when measuring a subcode containing $\Omega(k)$ logicals on a good quantum LDPC code~\cite{panteleev2022asymptotically}, branching and expander graphs would yield a space cost of $\tilde{\mathcal{O}}(n^2)$, and thickening would yield a space cost of $\mathcal{O}(n^2)$. By contrast, our method would yield a space cost of $n^{1+o(1)}$.

\begin{table}[]
    \centering
    \begin{tabular}{|l|c|c|c|c|}
    \hline
        \textbf{Method} & \textbf{Ancillary space} & \textbf{Time} & \textbf{Spacetime volume} & \textbf{Good code,} $\mathbf{t = \Omega(k)}$ \\
    \hline
        Thickening~\cite{zhang2025time} & $\mathcal{O}(\mu d)$ & $\mathcal{O}(d)$ & $\mathcal{O}(d(n+d\mu))$ & $\mathcal{O}(n^3)$\\
        Gauging~\cite{williamson2026low} & $\tilde{\mathcal{O}}(\mathfrak{w})$ & $\mathcal{O}(dt)$ & $\tilde{\mathcal{O}}(td(n+\mathfrak{w}))$ & $\tilde{\mathcal{O}}(n^3)$ \\
        Parallel~\cite{cowtan2026parallel} & $\tilde{O}(t\mathfrak{w})$ & $\mathcal{O}(d)$ & $\tilde{\mathcal{O}}(d(n+t\mathfrak{w}))$ & $\tilde{\mathcal{O}}(n^3)$ \\
        Fast~\cite{baspin2025fast} & $\tilde{\mathcal{O}}(nkd)$ & $\mathcal{O}(1)$ & $\tilde{\mathcal{O}}(nkd)$ & $\tilde{\mathcal{O}}(n^3)$ \\
    \hline
        This work & $\mu^{1+o(1)}$ & $\mathcal{O}(d)$ & $\mathcal{O}(d(n+\mu^{1+o(1)})) $& $n^{2+o(1)}$ \\
    \hline
    \end{tabular}
    \caption{Comparison of new and prior methods for measuring a subcode of size $\mu$ in an arbitrary $\llbracket n,k,d \rrbracket$ qLDPC CSS code. $\mathfrak{w}$ is the weight of the largest logical to be measured in the subcode. $t$ is the number of independent logicals being measured. The $\tilde{\mathcal{O}}(\cdot)$ notation suppresses polylog factors.}
    \label{tab:comparison}
\end{table}

The central lemma in the procedure is as follows:

\begin{restatable}{lemma}{sparseexpander}
\label{lem:sparse_relative_expander}
Let $A = \F_2^\mu \rightarrow \F_2^\nu$ be a length-1 cochain complex over $\F_2$ such that $\delta^1_A: \F_2^\mu \rightarrow \F_2^\nu$ is LDPC and $\nu \in \mathcal{O}(\mu)$.

    Then there exists a length-1 cochain complex $\chi = \chi^1 \rightarrow \chi^2$ over $\F_2$, along with cochain map $g: \chi \rightarrow A$, such that:
    \begin{enumerate}
        \item $g$ has row and column weights at most 1,
        \item $\dim \chi^1 + \dim \chi^2 = \mu(\log \mu)^{\mathcal{O}(\log \log \mu)} = \mu^{1+o(1)}$,
        \item $\delta^1_\chi : \chi^1 \rightarrow \chi^2$ is LDPC,
        \item $g^1(\ker \delta^1_\chi) = \ker \delta^1_A $,
        \item $|\delta^1_\chi x| \geq \min_{z \in \ker\delta^1_\chi} |g^1(x-z)|$, $\forall x \in \chi^1$.
    \end{enumerate}
    Moreover, $\chi$ can be constructed in time polynomial in $\mu$.
\end{restatable}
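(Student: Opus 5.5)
The plan is to read the lemma through the abstract: a linear PCPP for the linear circuit ``$\delta^1_A y = 0$''. Fix a linear PCPP with proof length $\mu(\log\mu)^{\mathcal{O}(\log\log\mu)}$, constant query complexity $q$ and linear tests. We will use one whose proof is itself a linear function of the input and which has constant proximity soundness. The PCPs of Ben-Sasson--Sudan and Dinur, made linear over $\F_2$ and with proximity, achieve quasi-linear length in this sense, and the $(\log\mu)^{\mathcal{O}(\log\log\mu)}$ factor is exactly what those constructions give. We apply this PCPP to the linear code $C=\ker\delta^1_A\subseteq\F_2^\mu$. The circuit computing $\delta^1_A$ has size $\mathcal{O}(\mu)$ because $\delta^1_A$ is LDPC and $\nu\in\mathcal{O}(\mu)$.

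Next we build $\chi$ from the PCPP. Set $\chi^1=\F_2^\mu\oplus\F_2^{P}$, where the first summand is the input and the second is the proof. Each verifier test is a linear check on $q$ coordinates of $\chi^1$. Take $\chi^2$ to have one coordinate per test, and let $\delta^1_\chi$ evaluate the tests. Define $g^1$ as the projection onto the input coordinates, which has row and column weights at most 1. For $g$ to be a cochain map we need $\delta^1_A g^1=g^2\delta^1_\chi$. We will arrange this by including the rows of $\delta^1_A$ among the tests and letting $g^2$ be the corresponding coordinate inclusion, transposed appropriately, which also has weights at most 1.

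With this construction, conditions 1 to 3 are checked as follows.
\begin{enumerate}
\item Sparsity of $\delta^1_\chi$ holds on rows because of constant query complexity. Column sparsity does not come for free: we will use a query-balanced (``smooth'') PCPP, or duplicate heavily queried positions and add equality tests along an expander. The duplication preserves the size up to constants and keeps the tests linear.
\item The size bound is the PCPP length bound.
\item Condition 4 comes from completeness and linearity. Every $y\in C$ has an honest proof $\pi(y)$, linear in $y$, that passes all tests, so $g^1(\ker\delta^1_\chi)\supseteq C$. Conversely, any element of $\ker\delta^1_\chi$ satisfies the $\delta^1_A$ rows, so its projection lies in $C$.
\end{enumerate}

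The main obstacle is condition 5, the relative cosystolic expansion with constant $1$. A PCPP only gives the following soundness: if the input $y$ is $\epsilon\mu$-far from $C$, then a constant fraction $\rho$ of tests reject. This gives $|\delta^1_\chi x|\ge \rho\,|\chi^2|\cdot\dist(y,C)/\mu$, which is off by the factor $|\chi^2|/\mu$ and fails for small distances. We plan to fix this in two steps.
\begin{enumerate}
\item Since $|\chi^2|\ge\mu$, choose $\epsilon$ small (or use a PCPP with rejection probability proportional to relative distance, via Dinur-style amplification). This gives $|\delta^1_\chi x|\ge c\cdot\dist(g^1x,C)$ for a constant $c$.
\item Boost $c$ to $1$ by taking a constant number of copies of each test.
\end{enumerate}
This keeps the complex LDPC and inflates size only by a constant factor. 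Since $\min_{z}|g^1(x-z)|=\dist(g^1x,C)$ by condition 4, this gives condition 5. The remaining subtlety is that the linear-proximity soundness must hold for every assignment to the proof part, including non-honest proofs. Linearity reduces this to a statement about cosets, which is precisely what a linear PCPP's soundness guarantees.

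Polynomial-time constructibility follows from the explicitness of the PCPP.
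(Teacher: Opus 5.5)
Your outline follows the paper's: tests become checks on input$\oplus$proof, the rows of $\delta^1_A$ are appended so the projection $g$ is a cochain map, columns are sparsified with expanders, and tests are repeated to reach constant $1$. Your argument for condition 4 is the paper's. The gap is condition 5. You verify soundness only for the raw test matrix, but the $\delta^1_\chi$ you actually build is the column-sparsified one. There, copies of the same input or proof symbol can be assigned inconsistently, and you give no argument that expansion survives this. The paper's sparsification lemma supplies it:
\begin{itemize}
\item On each cloud of copies take the majority value, and let $M$ count the minority copies.
\item Edge expansion $h\ge 2$ forces at least $2M$ violated equality checks.
\item Each copy lies in a single test row, so resetting the minority copies changes at most $M$ test rows. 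Hence $|D(a,p)|\ge|D_0(\bar a,\bar p)|+M$.
\item Since $\dist(a,\bar a)\le M$, this gives $|D(a,p)|\ge\min(1,\beta_0)\dist(a,\mathcal S)$.
\end{itemize}
This is also why the paper repeats tests \emph{before} sparsifying; copying tests afterwards needs expansion exceeding the number of copies.

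Separately, your option `choose $\epsilon$ small' fails. Threshold soundness at a fixed proximity parameter says nothing about inputs at distance below $\epsilon\mu$, yet condition 5 must hold at every distance. Only your parenthetical alternative works, namely rejection probability at least $\rho\,\dist/\mu$; this is the rejection-ratio notion in the paper's definition. With the PCPP actually available, $\rho=\Omega(1/\mathrm{polylog}\,\mu)$, so you need $\lceil\mu/(\rho T)\rceil$ repetitions rather than a constant number. That is still within the size budget.

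Second, your starting point is asserted rather than cited or proved. You need an $\F_2$-linear PCPP with constant queries, constant rejection ratio and quasi-linear length, and ``Ben-Sasson--Sudan and Dinur, made linear over $\F_2$ and with proximity'' is not a theorem you can invoke. The linear PCPP the paper relies on needs a field of size roughly $\sqrt{\mu}$ up to polylog factors, so it cannot be run over $\F_2$. Your plan is missing the paper's workaround:
\begin{itemize}
\item Regard $\delta^1_A$ over $K=\F_{2^{\lceil\log\mu\rceil}}$, apply the PCPP there, and realise the tester circuits as sparse $K$-check matrices.
\item Restrict scalars to $\F_2$. Soundness transfers because $\ker_K\delta^1_A=\ker_{\F_2}\delta^1_A\otimes_{\F_2}K$, which gives $\dist_K(a,\ker_K\delta^1_A)\ge\dist_{\F_2}(a^{(1)},\ker_{\F_2}\delta^1_A)$ for the $1$-component $a^{(1)}$. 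The other components are absorbed into the proof.
\item Split the $\mathcal{O}(\log\mu)$-weight rows created by restriction with XOR gadgets, at $\mathcal{O}(\log^2\mu)$ extra cost.
\end{itemize}
Without this, or a genuine proof of your $\F_2$ black box, the construction has no PCPP to start from.
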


Importantly, as a consequence the surgery gadget from our procedure can always be constructed in time polynomial in $\mu$. This is a major distinction from current attempts to reduce the spacetime overheads, which commonly require numerical distance calculation or estimation even in order to verify the procedure~\cite{gu2026qgpu,bhardwaj2026high,zheng2025high}, and are constructed in a somewhat ad-hoc manner.

There are some downsides of this procedure. The first is that we have not computed the constants, nor tried to optimise them, and so whether the procedure can be made practical for high-rate codes at blocklengths of interest~\cite{bhardwaj2026high, bravyi2024high} is not known to the authors. The second is that, while the code remains sparse at all times, the \textit{detectors} (sets of checks which detect spacetime faults) are not generally; that is, the quantum system is always sparse, but the decoding matrix which the classical coprocessor must decode in real time may not be. This is similar to the construction in Ref.~\cite[Sec.~IV]{baspin2025fast}, but unlike the constructions in Refs.~\cite{cowtan2026parallel,yuan2026parsimonious,ide2025fault,he2025extractors}, which are sparse and yield sparse detectors. We believe that the detectors can be made sparse without further asymptotic overhead, but it substantially complicates the procedure, which is already nontrivial. The third is that, while measuring a subcode allows one to address a large number of possible subsets of operators (as we show in Sec.~\ref{sec:coning}), including any arbitrary set of single-qubit logicals, it is not fully addressable in the sense of Ref.~\cite{cowtan2026parallel}.

Our work also opens up a new avenue of research, by connecting the construction of efficient surgery gadgets to Probabilistically Checkable Proofs of Proximity (PCPPs), a topic in complexity theory~\cite{arora2009computational, meir2012combinatorial}. Our construction essentially works by taking a linear PCPP, which produces a set of linear circuits, finding a realisation of them as a matrix, and sparsifying that until it is LDPC, then incorporating that into the auxiliary complex used for surgery~\cite{ide2025fault}. This is a circuitous route, and it would be more practical to be able to borrow from the line of reasoning that leads to linear PCPPs in the first place, and get direct matrix constructions from that; that would be much cleaner than transporting between circuits and matrices in the manner we have done here. This connection between linear PCPPs and auxiliary complexes for surgery is completely unexplored, and, as we demonstrate in this paper, it deserves further attention.

As PCPPs are quite a departure from the arguments typically used in quantum code surgery, we aim to use PCPPs as little as possible: we give the relevant definitions and use a single theorem from Ref.~\cite{meir2012combinatorial}. Other than that, reading the paper requires no prior understanding of the literature on PCPPs. In App.~\ref{app:improved_PCPP} we give a proof sketch of a slightly better linear PCPP than that which is used in the main body of the paper. This implies that the asymptotic overhead of $\mu^{1+o(1)}$ for high-rate surgery can be improved to $\mu (\log \mu)^{\mathcal{O}(1)} = \tilde{\mathcal{O}}(\mu)$. As this requires some further knowledge of PCPPs, and gives only a minor improvement to the asymptotic overheads, we do not use this argument in the main body.

\section{Preliminaries}

Very little of the preliminaries section is novel, although we frame logical bases and surgery by coning in a manner which is slightly different from past works~\cite{cowtan2026parallel,ide2025fault}.

\begin{definition}\label{def:field_distances}
    Let $K$ be a field. Let $y \in K^m$ be a vector. Then $|y|_K$ is the Hamming weight over $K$. Let $C \subseteq K^m$ be a subspace. Then $\dist_K(y, C) = \min_{c\in C}|y-c|_K$. Where the field is obvious, we just write $|y|$ and $\dist(y,C)$.
\end{definition}

Note that when $K = \F_{2^k}$, $y-c = y+c$.

\begin{definition}
    A cochain complex is a graded vector space $C = \bigoplus_{i \in \Z}C^i$ over a field $K$, equipped with $K$-linear differential $\delta = \sum_i\delta^i$ where $\delta^i: C^i \rightarrow C^{i+1}$ and $\delta\delta = 0$; on components, this means that $\delta^{i+1}\delta^i = 0$.
\end{definition}

\begin{definition}
    Let $A$ and $C$ be cochain complexes. Then a cochain map $f: A \rightarrow C$ is a linear map such that $f\delta_A = \delta_Cf$; on components, this means that $f^i \delta^{i-1}_A = \delta^{i-1}_Cf^{i-1}$.
\end{definition}

Every cochain complex in this paper satisfies $C^i = K^{N_i}$ for some $N_i \in \N$, $\forall i$, so that Hamming weights on cochain complexes are well-defined, and we can consider the differentials and cochain maps to be matrices. Also, every cochain complex has at most 4 nonzero components.

\subsection{Linear PCPPs}

A \textit{Probabilistically Checkable Proof} (PCP) verifier for a language $L$ is an algorithm that verifies whether a word $w$ is in $L$ by querying a small number of bits from a purported proof $\pi$~\cite{arora2009computational,dinur2007pcp}. Typically, the verifier is given only oracle access to $\pi$. The verifier, viewed as an algorithm, outputs its queries as a list of coordinates, and a circuit to be applied to the answers of those queries. The verifier \textit{accepts} if the circuit is satisfied by the answers given to the queries, and \textit{rejects} otherwise. The verifier may accept a word $w$ even if $w \not\in L$, but a PCP verifier has a rejection ratio which upper bounds the likelihood of erroneous acceptance happening.

A \textit{Probabilistically Checkable Proof of Proximity} (PCPP) is a generalisation of a PCP, with a slightly different setup~\cite{ben2005short, dinur2006assignment, meir2012combinatorial}. A PCPP verifier takes in two inputs: (1) an explicit input, which the verifier is allowed to read entirely, and (2) an implicit input, from which the verifier is only allowed to access a small number of bits. The aim is then to verify that the implicit input $x$ is close to the set
\[
L(w) := \{x' : (w,x')\in L\},
\]
by reading all of $w$, a small number of the bits from $x$, and from an additional proof $\pi$. \footnote{A PCP verifier is therefore a PCPP verifier where there is no implicit input.}

We are concerned with \textit{linear} PCPPs~\cite{meir2012combinatorial}. Let $K$ be a field. The explicit input of a linear PCPP is a linear subspace $W \subseteq K^m$. The verifier reads a small number of field elements from a vector $x \in K^m$, and from an additional proof string $\pi \in K^\ell$. The verifier always accepts if $x \in W$, and rejects with some probability if $x$ is far from $W$, that is if $\dist_K(x,W)$ is high. The subspace is defined by a linear circuit.

\begin{definition}\label{def:linear_circuit}
     \cite{valiant1977graph} A linear circuit is defined over a field $K$. Each wire of the circuit carries a copy of $K$, and each gate in the circuit computes a linear combination of its inputs. Every output of the circuit is a linear function of its inputs, and every linear function over $K$ can be computed by such a circuit. The size of a linear circuit is the number of wires in the circuit. Crucially, we allow only circuits with fan-in and fan-out upper-bounded by 2. \footnote{Note also that, following Ref.~\cite{meir2012combinatorial}, these circuits are strictly linear, i.e. non-affine.}
\end{definition}

We say that a linear circuit $\varphi: K^m \rightarrow K^p$ accepts an input $x \in K^m$ if $\varphi(x) = 0 \in K^p$, and otherwise we say that $\varphi$ rejects $x$. Observe that the set of inputs accepted by a linear circuit $\varphi: K^m \rightarrow K^p$ is a linear subspace of $K^m$ of dimension at least $m-p$. We denote the subspace of accepted inputs by $\SAT(\varphi)$, and
say that $\varphi$ accepts $\SAT(\varphi)$.

\begin{remark}
    We can think of $\SAT(\varphi)$ as being the `kernel' of $\varphi$, seen as a matrix. That is, if $\varphi$ is a linear map, then $\SAT(\varphi) = \ker(\varphi)$, and $\SAT(\varphi) = W \subseteq K^m$, the explicit input to the PCPP.
\end{remark}

\begin{definition}[Linear PCPP, adapted from~\cite{meir2012combinatorial,bensasson2009sound}]\label{def:linear_pcpp}
    Let $r, \ell, d: \N \rightarrow \N$, $\rho: \N \rightarrow (0,1)$. A linear PCPP verifier $V$ with \textit{randomness complexity} $r$, \textit{proof length} $\ell$, \textit{decision complexity} $d$, and \textit{rejection ratio} $\rho$, is a probabilistic polynomial time machine that satisfies the following requirements:

    \begin{enumerate}
        \item \textbf{Input:} The verifier $V$ takes as input a finite field $K$ and a linear circuit $\varphi: K^m \rightarrow K^p$ of size $n$.
        \item \textbf{Output:} For every $\omega \in \{0, 1\}^{r(n)}$, the verifier $V$ outputs a tuple $I_\omega$ of coordinates in $[m+ \ell(n)]$, and a linear circuit $\psi_\omega: K^{|I_\omega|}\rightarrow K^{q_\omega}$ of size at most $d(n)$.
        \item \textbf{Randomness complexity:} On every finite field $K$ and input circuit $\varphi$, \[\omega \leftarrow \{0, 1\}^{r(n)}\] is chosen uniformly at random.
        \item \textbf{Completeness:} For every $x \in \SAT(\varphi)$ there exists a string $\pi \in K^{\ell(n)}$ such that, for every $\omega \in \{0, 1\}^{r(n)}$,
        \[\psi_\omega ((x\circ \pi)_{\restriction I_\omega})\ \mathrm{accepts}\]
        where $\psi_\omega$ and $I_\omega$ are generated by the verifier $V$ on input $(K, \varphi)$. We refer to $\pi$ as the \textit{proof of} $x$.
        \item \textbf{Soundness:} For every $x \in K^m$ and every string $\pi \in K^{\ell(n)}$,
        \[ \mathrm{Pr}_{\omega \leftarrow \{0, 1\}^{r(n)}} \big[ \psi_\omega ((x\circ \pi)_{\restriction I_\omega})\ \mathrm{rejects}\big] \geq \frac{\rho(n)}{n}\cdot \dist(x, \SAT(\varphi)),\]
        where $\psi_\omega$ and $I_\omega$ are generated by the verifier $V$ on input $(K, \varphi)$.
    \end{enumerate}
\end{definition}
\begin{remark}\label{rem:pcpp_runtime}
Note that generating a \textit{single} circuit corresponding to a string $\omega$ is probabilistic if $\omega$ is chosen randomly, and runs in polynomial time in the input size $n$. The probability in (5.) is the probability of rejection \textit{before} an $\omega$ is generated; once an $\omega$ is generated, both $I_\omega$ and $\psi_\omega$ are deterministic. Later on, we will use the full set of $2^{r(n)}$ strings $\omega$ with length $r(n)$. That is deterministic, and runs in time exponential in $r(n)$ and polynomial in $n$.
\end{remark}

In other words, given input $(K, \varphi)$, the verifier $V$ samples a uniformly random string $\omega \in \{0,1\}^{r(n)}$. Then, conditioned on $\omega$, it deterministically outputs a tuple of coordinates $I_\omega$ in $[m + \ell(n)]$, and a linear circuit $\psi_\omega: K^{|I_\omega|}\rightarrow K^{q_\omega}$ of size at most $d(n)$. Note that the linear circuit has fan-in and fan-out bounded by 2.

\begin{remark}\label{rem:pcpp_probs}
    By definition, 
\[\mathrm{Pr}_{\omega \leftarrow \{0, 1\}^{r(n)}}\big[ \psi_\omega ((x\circ \pi)_{\restriction I_\omega})\ \mathrm{rejects}\big] = \frac{1}{2^{r(n)}}\#\{\omega \in \{0,1\}^{r(n)}: \psi_\omega((x\circ \pi)_{\restriction I_\omega})\ \mathrm{rejects}\},\]
as each $\omega$ occurs with equal probability. Note also that $|I_\omega| \leq d(n)$, as a circuit with at most $d(n)$ wires must have at most $d(n)$ input wires. \footnote{$|I_\omega|$ is the \textit{query complexity}, which we do not need to keep track of explicitly for our purposes.} $\psi_\omega$ must also have at most $d(n)$ output wires.
\end{remark}

\begin{definition}
    Let $F : \N \rightarrow \N$. We say that a linear PCPP has a minimal field size $F$ if, whenever
it is invoked on a linear circuit $\varphi$ of size $n$ over a field $K$, it is required that $|K|\geq F(n)$.
\end{definition}

\begin{theorem}[Linear PCPP Theorem]\cite[Sec.~6]{meir2012combinatorial}\label{thm:pcpp}
    Let $\varphi : K^m \rightarrow K^p$ be a linear circuit of size $n$ over a finite field $K$.
    Then there exists a linear PCPP verifier $V$ with minimal field size $\tilde{\mathcal{O}}(\sqrt{n})$ which takes as input $(K, \varphi)$, with the following parameters:
    \begin{itemize}
        \item $r(n) = \log n + \mathcal{O}((\log \log n)^2)$,
        \item $\ell(n) = n(\log n)^{\mathcal{O}(\log \log n)}$,
        \item $d(n) = \mathcal{O}(1)$,
        \item $\rho(n) = \Omega(\frac{1}{\mathrm{polylog}(n)})$.
    \end{itemize}
\end{theorem}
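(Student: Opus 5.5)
We will not reprove this from scratch. Since the statement is imported from \cite[Sec.~6]{meir2012combinatorial}, the main step is to check that Meir's construction, specialised to linear circuits, yields a verifier meeting Definition~\ref{def:linear_pcpp} with the stated parameters. Still, here is the route we would follow to establish it. It is the standard ``algebraic encoding, then low-degree proximity testing, then composition'' paradigm, and at every stage we would track that all maps involved are $K$-linear.

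\textbf{Arithmetisation.} First, reduce the linear circuit $\varphi$ of size $n$, with fan-in and fan-out at most $2$, to a constraint system on wire values $w \in K^n$.
\begin{itemize}
\item Each gate imposes a homogeneous linear constraint of constant arity.
\item Each output wire is constrained to be zero.
\item The input wires form the implicit input $x$.
\end{itemize}
Embed the wire indices into a structured domain, for example a de Bruijn or affine routing structure. This lets the gate constraints be expressed as ``a low-degree extension of $w$ composed with a few fixed affine shifts satisfies a fixed linear identity.''

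\textbf{Base PCPP.} Next, encode $w$ via a bivariate Reed--Solomon or Reed--Muller low-degree extension over $K$. This is where the minimal field size $\tilde{\mathcal{O}}(\sqrt n)$ enters: we need a degree-$\sqrt n$ bivariate code over a domain of size $\tilde{\mathcal{O}}(\sqrt n)$ per coordinate. Following Ben-Sasson--Sudan, obtain a PCPP with the following properties.
\begin{itemize}
\item The proof length is $n\,\mathrm{polylog}(n)$.
\item The query count is $\mathrm{polylog}(n)$.
\item Rejection is proportional to distance, with the distance to $\SAT(\varphi)$ measured on the $x$-part.
\end{itemize}
The proof consists of the encoded assignment, the constraint-quotient polynomials, and proximity proofs for Reed--Solomon codes. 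All three are $K$-linear functions of $x$, and every test (low-degree and consistency checks) is a linear equation on the queried symbols. Hence the decision predicates $\psi_\omega$ are linear circuits, and completeness holds with a proof $\pi$ that is linear in $x$.

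\textbf{Recursion and bookkeeping.} To reach $d(n) = \mathcal{O}(1)$, we would first make the verifier \emph{robust}, meaning the query answers are far from accepting views. We then compose it with itself, using the PCPP property: the decision circuit $\psi_\omega$ is linear and of size $\mathrm{polylog}(n)$, and is verified by an inner linear PCPP. Each composition round has the following effects.
\begin{itemize}
\item The size parameter drops from $n'$ to $\mathrm{polylog}(n')$.
\item The proof length is multiplied by a $\mathrm{polylog}(n)$ factor.
\item The rejection ratio loses a constant factor.
\item The randomness grows by $\mathcal{O}(\log\log n)$ bits.
\end{itemize}
After $\mathcal{O}(\log\log n)$ rounds, this yields
\[
\ell(n) = n(\log n)^{\mathcal{O}(\log\log n)}, \qquad r(n) = \log n + \mathcal{O}\big((\log\log n)^2\big), \qquad \rho(n) = \Omega\big(1/\mathrm{polylog}(n)\big).
\]
The resulting decision complexity is constant. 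Meir's combinatorial approach achieves the same parameters, replacing algebraic composition with tensor-product and amplification steps.

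\textbf{Main obstacle.} We expect the main difficulty to be robustness and proximity under composition while keeping everything linear. Specifically, we must show that the inner verifier's distance guarantee, applied to each outer view, transfers to the rejection-probability bound in terms of $\dist(x,\SAT(\varphi))$. We must also ensure that the robustification step, which encodes the views with a linear error-correcting code, does not break linearity or degrade $\rho$ by more than a constant per round. We would handle this exactly as in \cite{meir2012combinatorial}, using linear codes with constant rate and distance for the view encodings. These encodings preserve linearity automatically.
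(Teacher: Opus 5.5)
Deferring to \cite{meir2012combinatorial} is what the paper does. However, the route you then sketch is not Meir's, and its bookkeeping does not produce the stated parameters.

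The paper reconstructs the theorem from three pieces of Meir's paper. The first is Thm.~5.1, a linear PCPP with randomness $\tfrac12\log n+\mathcal{O}(\log\log n)$, decision complexity and minimal field size $\tilde{\mathcal{O}}(\sqrt n)$, and rejection ratio $\Omega(1)$. The second is the tensor-code robustisation of Thm.~4.7. The third is the composition Thm.~2.11, iterated $\log\log n$ times. The shape of the parameters comes from this chain:
\begin{itemize}
\item each level shrinks the circuit as $n'\mapsto\tilde{\mathcal{O}}(\sqrt{n'})$, hence $\log\log n$ levels;
\item the randomness is the geometric sum $\tfrac12\log n+\tfrac14\log n+\cdots$ plus $\mathcal{O}(\log\log n)$ per level, hence $\log n+\mathcal{O}((\log\log n)^2)$;
\item each level costs a polylog factor in proof length, hence $(\log n)^{\mathcal{O}(\log\log n)}$;
\item each level costs only a constant factor in $\rho$, hence $1/\mathrm{polylog}(n)$.
\end{itemize}
In your scheme a round shrinks $n'$ to $\mathrm{polylog}(n')$, so $\mathcal{O}(\log^{*}n)$ rounds would suffice. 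Your $\log\log n$ round count, and with it the $(\log\log n)^2$ and $(\log n)^{\mathcal{O}(\log\log n)}$ terms, are therefore read off from the target rather than derived. Also, Meir's chain uses composition (Thm.~2.11); it does not replace it.

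\textbf{The main gap.} The substantive problem is your per-round claim that $\rho$ drops by only a constant factor. This is what makes $\log\log n$ rounds affordable, and you defer it to ``encoding the views with constant-rate, constant-distance linear codes''. For an outer verifier that makes $q=\mathrm{polylog}(n)$ individual queries, as a Ben-Sasson--Sudan verifier does, that generic robustisation only gives robustness $\Omega(\rho/q)$. The reason is that a rejected view may be a single symbol away from an accepting one. You give no reason such verifiers are constant-robust. This is exactly where Meir's argument relies on the structured (block, row/column) access of the Thm.~5.1 verifier, which lets the tensor-code robustisation of Thm.~4.7 cost only a constant. With a polylog loss per round over your $\log\log n$ rounds, you would get $\rho=(\log n)^{-\Theta(\log\log n)}$, which violates the statement.

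\textbf{The base step.} This is likewise asserted rather than available. Ben-Sasson--Sudan arithmetise Boolean circuits with univariate Reed--Solomon codes over characteristic-2 fields of size $\Omega(n)$. The $\tilde{\mathcal{O}}(\sqrt n)$ field size in the statement comes instead from the Reed--Solomon codes inside Meir's simultaneous PCPP, not from a bivariate low-degree extension. A linear version of Ben-Sasson--Sudan over an arbitrary $K$ with $|K|=\tilde{\mathcal{O}}(\sqrt n)$ would need its own proof.

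\textbf{Composition is avoidable given such a base.} Suppose you did have a linear base with polylog decision complexity and $1/\mathrm{polylog}(n)$ rejection ratio. Then you could replace each decision circuit by its constant-arity gate checks and append the internal wires to the proof. This is the device of Lemma~\ref{lem:matrix_realisation}, and essentially what App.~\ref{app:improved_PCPP} does over $\F_2$. It already gives $d=\mathcal{O}(1)$ at only a polylog cost in $\rho$, with no composition needed.
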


\begin{remark}\label{rem:minimal_field}
    As the minimal field size is $\tilde{\mathcal{O}}(\sqrt{n})$, we note that any field with size $|K| \geq n$ satisfies the condition (for sufficiently large $n$); this is important in Sec.~\ref{sec:circuit_construction} below.
\end{remark}

\begin{remark}
    Thm.~\ref{thm:pcpp} is not stated as a theorem in Ref.~\cite{meir2012combinatorial}, but is stated explicitly in Sec.~6. In Ref.~\cite{meir2012combinatorial}, it is a step used in the process of completing the proof of Thm.~2.5 and can be reconstructed by chaining together Thms.~5.1, 4.7, and then the composition Thm.~2.11\footnote{The composition theorem is a linear version of~\cite[Thm.~2.4]{ben2004robust}.} $\log \log n$ times in order. 
\end{remark}

Thm.~\ref{thm:pcpp} may seem somewhat mysterious at first glance. It does not explain what the algorithm actually does to produce the linear circuits, and the dependency on the field size is not obvious. We give some exposition of what Thm.~\ref{thm:pcpp} is doing in Appendix~\ref{app:pcpp}. From now on, for brevity we treat Thm.~\ref{thm:pcpp} as a `black box', and do not use the details of how it works beyond what has been stated in this section.

\subsection{Expansion}

\begin{definition}[Edge expansion]\label{def:edge_expansion}
    Let $\mathcal{G}(V,E)$ be an undirected (multi)graph. Let $G$ be the incidence matrix $\F_2V \rightarrow \F_2E$. $\mathcal{G}$ has edge expansion $h$ if,
    \[|G v| \geq h \min\{|v|, |\textbf{1}_V - v|\},\]
    for all $v \in \F_2V$, where $\textbf{1}_V$ is the all-1s vector in $\F_2V$.
\end{definition}

\begin{definition}[Relative cosystolic expansion]\label{def:relative_exp}
    Let $\chi$ and $Y$ be cochain complexes over $\F_2$. Let $f: \chi \rightarrow Y$ be a cochain map, and let $\tau \in \N$. Then $\chi$ has cosystolic expansion $\beta_\tau$ relative to $f$ at degree $k$ if,
    \[|\delta^k_\chi x| \geq \beta_\tau \min\{\tau, \min_{z \in \ker \delta^k_\chi}|f^k(x-z)|\}\]
    for all $x \in \chi^k$, where $|\cdot|$ is the Hamming weight.
\end{definition}

Def.~\ref{def:relative_exp} is essentially the same definition as ``modular expansion'' in Ref.~\cite[Def.~7.2]{cowtan2025fast}, albeit in different notation, and has been used in similar homological form in e.g. Refs.~\cite{yuan2026quantum, yuan2026unified, yuan2026noncss}. It is a generalisation of the relative expansion of graphs in Ref.~\cite{swaroop2026universal}, and of cosystolic expansion in the study of high-dimensional expanders~\cite{dikstein2024coboundary}.
Throughout, we will shorten ``relative cosystolic expansion'' to ``relative expansion'', for convenience.

\begin{remark}
    Seen in the right way, relative expansion looks suspiciously like the rejection probability of a linear PCPP. In particular, say we start with some vector $x$ in the domain of $f^k$, but $x \notin \ker \delta^k_\chi$. Then the condition is that for every $\pi$ with support elsewhere in $\chi^k$, $\delta^k_\chi (x+\pi)$ is sufficiently large. Compare with the linear PCPP in Def.~\ref{def:linear_pcpp}, which stipulates that the rejection probability of $\psi_\omega((x\circ \pi)_{\restriction I_\omega})$ is sufficiently large in point 5. We can therefore think of $K^m$, the implicit input, as being `like' $\supp f$, the domain of $f$, and $K^{\ell(n)}$, the proof, as being `like' $\chi^k/\supp f$, the rest of $\chi^k$.

    Of course, there are nontrivial differences: the linear PCPP is concerned with linear circuits, rather than matrices, and is only testing coordinates specified by $I_\omega$. A main aim of the later sections is to show that these differences are sufficiently superficial that relative expanders can be constructed from linear PCPPs.
\end{remark}

A common way to increase the relative expansion of some candidate $\chi$ is to `thicken', i.e. take the tensor product with a path graph of length $L$~\cite{cowtan2025fast, cohen2022low, zhang2025time, cowtan2024ssip, baspin2025fast, zheng2025high}.

\begin{lemma}\cite[Thm.~7.8]{cowtan2025fast}\label{lem:thickening}
    Let $\chi$ be a relative expander with expansion $\beta_\tau < 1$ at degree 1, with cochain map $f: \chi \rightarrow C$. Let $R$ be the cochain complex of the path graph, $\F_2V \rightarrow \F_2E$, of length $L = \lceil\frac{1}{\beta_\tau}\rceil > 1$, where $|V| = |E|+1$. \footnote{$R$ also describes a parity check matrix of the repetition code of length $L$.}
    
    Then $\chi' = R \otimes \chi$ is a relative expander with expansion $\beta'_\tau \geq 1$ at degree 1, where the cochain map $f' : \chi' \rightarrow C$ is given by $f\circ\zeta $, where $\zeta$ is the projection of any of the $L$ copies of $\chi$ in $R \otimes \chi$ onto $\chi$.
\end{lemma}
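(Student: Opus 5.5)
The plan is to split the coboundary of $\chi'=R\otimes\chi$ into a ``vertical'' part, controlled copy by copy by the expansion of $\chi$, and a ``horizontal'' part, which charges for moving between copies along the path. I write degree-1 elements of $\chi'$ as pairs $(x,y)$ with $x=(x_v)_{v\in V}$, $x_v\in\chi^1$, and $y=(y_e)_{e\in E}$, $y_e\in\chi^0$. For an edge $e=\{v,v'\}$, the coboundary has components $\delta_\chi x_v\in\chi^2$ at each vertex and $x_v+x_{v'}+\delta^0_\chi y_e\in\chi^1$ at each edge. Hence
\[
|\delta'(x,y)|=\sum_{v\in V}|\delta_\chi x_v|+\sum_{e=\{v,v'\}}|x_v+x_{v'}+\delta_\chi y_e| .
\]
Write $D(w)=\min_{z\in\ker\delta_\chi}|f^1(w-z)|$ for $w\in\chi^1$, and let $v_0$ be the vertex whose copy defines $f'$.

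\textbf{Step 1 (vertical part).} Apply the relative expansion of $\chi$ to every copy:
\[
\sum_v|\delta_\chi x_v|\ \ge\ \beta_\tau\sum_v\min\{\tau,D(x_v)\}\ \ge\ L\beta_\tau\min\{\tau,D_*\}\ \ge\ \min\{\tau,D_*\},
\]
where $D_*=\min_v D(x_v)$ is attained at some vertex $v_*$. The last inequality uses $L=\lceil 1/\beta_\tau\rceil$ together with the fact that there are at least $L$ copies of $\chi^1$.

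\textbf{Step 2 (horizontal part).} Walk along the path from $v_*$ to $v_0$. Telescoping gives
\[
x_{v_0}=x_{v_*}+\sum_{e\in P}\bigl(x_v+x_{v'}+\delta_\chi y_e\bigr)+\delta_\chi\Bigl(\sum_{e\in P}y_e\Bigr),
\]
where $P$ is the set of edges on that path. The last term lies in $\operatorname{im}\delta^0_\chi\subseteq\ker\delta^1_\chi$, so it can be absorbed into the minimising $z$. This yields
\[
D(x_{v_0})\le D_*+\sum_e|f^1(x_v+x_{v'}+\delta_\chi y_e)|\le D_*+\sum_e|x_v+x_{v'}+\delta_\chi y_e| .
\]
The second inequality needs $|f^1 w|\le|w|$, which holds because $f$ has column weight at most $1$, as in the setting of Lemma~\ref{lem:sparse_relative_expander}. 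Without that, I would lose a constant equal to the maximum column weight of $f^1$, and would absorb it by lengthening $L$ by that factor.

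\textbf{Step 3 (kernel comparison and combining).} Any $z\in\ker\delta_\chi$ lifts to $z'=((z)_{v\in V},0)\in\ker\delta'$, and $f'(z')=f^1(z)$. Therefore
\[
\min_{z'\in\ker\delta'}|f'((x,y)-z')|\le D(x_{v_0}).
\]
Adding Steps 1 and 2 gives
\[
|\delta'(x,y)|\ \ge\ \min\{\tau,D_*\}+\bigl(D(x_{v_0})-D_*\bigr)\ \ge\ \min\{\tau,D(x_{v_0})\},
\]
which is expansion $\beta'_\tau\ge1$.

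The main obstacle is Step 2: ensuring the edge terms transfer to distances under $f$ without loss, and correctly absorbing the $\delta_\chi y_e$ contributions into $\ker\delta_\chi$. Steps 1 and 3 are routine once the sign conventions of the tensor product are fixed, and these conventions are immaterial over $\F_2$.
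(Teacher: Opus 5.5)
The paper does not prove this lemma; it quotes it from \cite[Thm.~7.8]{cowtan2025fast}, so there is no in-paper argument to compare your route with. Your argument is a complete and correct proof under the extra hypothesis that $f^1$ has column weight at most $1$. In particular:
\begin{itemize}
\item the edge components of $\delta'$ are exactly $x_v+x_{v'}+\delta^0_\chi y_e$;
\item the telescoping identity along the path holds over $\F_2$;
\item the term $\delta^0_\chi(\sum_{e\in P}y_e)$ is legitimately absorbed into $\ker\delta^1_\chi$;
\item the diagonal lift $z\mapsto((z)_{v\in V},0)$ lands in $\ker\delta'$;
\item the final two-case check ($D_*\le\tau$ versus $D_*>\tau$, using $D(x_{v_0})\ge D_*$) gives $|\delta'(x,y)|\ge\min\{\tau,D(x_{v_0})\}$.
\end{itemize}
That hypothesis holds wherever the paper uses thickening, since $f$ and $g$ there have column weight at most $1$.

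What does not go through is your fallback claim that a column weight $c>1$ can be absorbed by lengthening $L$ by a factor $c$. Lengthening $R$ only inflates the vertical term. The horizontal term, which must pay for $D(x_{v_0})-D_*$, is independent of $L$. Your estimate therefore only yields $|\delta'(x,y)|\ge\min\{\tau,D_*\}+\tfrac{1}{c}\bigl(D(x_{v_0})-D_*\bigr)\ge\tfrac{1}{c}\min\{\tau,D(x_{v_0})\}$, that is, $\beta'_\tau\ge 1/c$ for every $L$.

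This loss is genuine, as the following example shows. Take $\chi^0=0$, $\chi^1=\chi^2=\F_2$, $\delta^1_\chi=1$, and $f^1(1)=c_0$ for some $c_0\in C^1$ with $|c_0|\ge\tau$. Set $f^2(1)=\delta^1_Cc_0$, which makes $f$ a cochain map. Then $\ker\delta^1_\chi=0$ and $\beta_\tau=1/\tau$. In $R\otimes\chi$, for any length, $\ker\delta'^1=0$. The indicator $e_{v_0}$ of the chosen copy has $|\delta'e_{v_0}|=1+\deg(v_0)\le 3$ but $|f'(e_{v_0})|=|c_0|\ge\tau$. Hence $\beta'_\tau\le 3/\tau<1$ once $\tau\ge 4$.

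So the column-weight condition is a necessary hypothesis that the statement as printed tacitly relies on. You should state it as a hypothesis rather than suggest it can be engineered away by thickening further.
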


Evidently, if $L$ is large then the cost of boosting the relative expansion via thickening can be substantial. Indeed, in the worst case, $\chi$ could have expansion $\beta_\tau = 1/\tau$, in which case thickening $\tau$ times is necessary; when $\tau =d$, the distance of a code, and $\dim(\chi) = \mathcal{O}(n)$, where $n$ is the blocklength, $R\otimes \chi$ has size $\mathcal{O}(nd)$, a formidable overhead~\cite{zhang2025time,zheng2025high,cowtan2025fast}.

\subsection{Surgery by coning}\label{sec:coning}

A crucial framework for performing surgery on CSS codes is that of mapping cones~\cite{ide2025fault}, which allow one to reason about surgery using homological algebra. In short, the idea is to attach an auxiliary complex via a (co)chain map, and then prove various properties of the surgery procedure using the resultant chain complex. The logical operators that are measured are converted into stabilisers by the auxiliary complex.

\begin{definition}
    Let $C$ be a CSS code, described by a length-2 cochain complex. Then,
    \[C =C^0 \rightarrow C^1 \rightarrow C^2\]
    where $\delta^0_C: C^0 \rightarrow C^1$ is $H_X^\top$, and $\delta^1_C: C^1 \rightarrow C^2$ is $H_Z$; $H_X$ and $H_Z$ are the $X$-type and $Z$-type parity check matrices respectively.

    Then the blocklength of $C$ is $n(C) = \dim C^1$, the quantum dimension 
    \[k(C) = \dim H^1(C) = \dim \ker \delta^1_C/\im \delta^0_C = \dim H_1(C) = \dim \ker (\delta^0_C)^\top/\im (\delta^1_C)^\top,\]
    and the distance is $d(C) = \min(d^X(C), d^Z(C))$, where
    \[d^X(C) = \min_{c \in \ker\delta^1_C\backslash \im \delta^0_C}|c|,\qquad d^Z(C) = \min_{c \in \ker (\delta^0_C)^\top \backslash \im (\delta^1_C)^\top}|c|.\]
\end{definition}

Given an \textit{auxiliary complex} $A$ and cochain map $f: A \rightarrow C$, the \textit{coned code} $\cone(f)$ has components $A^{i+1}\oplus C^{i}$ and differentials \[\delta^i_{\cone(f)} = \begin{pmatrix}
    \delta^{i+1}_A & 0 \\
    f^{i+1} & \delta^i_C
\end{pmatrix}.\]
That is, for $(a,c)\in A^{i+1}\oplus C^i$,
\[\delta^i_{\cone(f)}(a,c) = (\delta^{i+1}_Aa, f^{i+1}a + \delta^i_Cc) \in A^{i+2} \oplus C^{i+1}.\]
In the coned code, some $\bar{X}$ logicals in the original code are converted into stabilisers, measuring them. We have a short exact sequence,
\begin{equation}\label{eq:ses}
    0 \rightarrow C \rightarrow \cone(f) \rightarrow A[1] \rightarrow 0,
\end{equation}
where $A[1]$ is $A$, the auxiliary complex, with all degrees shifted by 1. We can work out which logicals are measured using the Snake lemma~\cite{ide2025fault,weibel1994introduction}, expanding out Eq.~\ref{eq:ses} to a long exact sequence.

The measurement procedure (for $\bar{X}$ logicals) does the following~\cite{he2025extractors}:
\begin{enumerate}
    \item Begin in the code $C$.
    \item Initialise all new qubits in $A[1]$ in the $\ket{0}$ state.
    \item Measure all stabilisers of $\cone(f)$, including any new $X$ and $Z$ checks, and the $Z$ checks of $C$ which have now been deformed by the cone.
    \item After $r$ rounds, measure out the qubits in $A[1]$ in the $Z$ basis, concluding the logical measurement and incurring a frame error on $C$ (which is known, and can be tracked classically).
\end{enumerate}
If $d(\cone(f)) \geq d(C)$ and $r \geq d(C)$, this procedure always has phenomenological fault distance at least $d(C)$~\cite{williamson2026low}, where faults may occur on state initialisation, qubits between syndrome rounds, and qubit measurements.

\subsubsection{Union-addressable bases}

\begin{definition}[Union-addressable basis]\label{def:uab}
    Let $C = C^0 \rightarrow C^1 \rightarrow C^2$ be an $\llbracket n, k, d \rrbracket$ qubit CSS code, viewed as a length-2 cochain complex over $\F_2$. A \textit{union-addressable basis} of $C$ is:
    \begin{itemize}
        \item A basis $\mathcal{B} = \{[u_i]\}$, for $i \in [k]$, of $H_1(C)$.
        \item A paired basis $\mathcal{B}' = \{[v_j]\}$, for $j \in [k]$, of $H^1(C)$, such that $[u_i]\cdot [v_j] = \delta_{i,j}$, where $a\cdot b$ is the canonical nondegenerate bilinear form on $H_1(C) \otimes H^1(C)$.
        \item Representatives $u_i \in [u_i]$ and $v_j \in [v_j]$ such that ${\rm supp}(u_i) \cap {\rm supp}(v_j) \neq \emptyset$ iff $i = j$, where ${\rm supp}(a)$ is the support of a vector on its basis elements in $C^1$. \footnote{For a $\bar{Z}$ logical, its basis elements are in $C_1 \cong C^1$, where the isomorphism is the identity.}
    \end{itemize}
\end{definition}

This might seem like a stringent set of conditions, but in fact any CSS code possesses a union-addressable basis~\cite[App.~C]{cowtan2026parallel}, which can be acquired via Gaussian elimination, and similar definitions and results apply to stabiliser codes more generally~\cite{gottesman1997stabilizer}. 
Canonical union-addressable bases are well-known for hypergraph product codes~\cite{quintavalle2023partitioning, chang2026constant}, and have recently been constructed for lifted product codes~\cite{bhardwaj2026high, zheng2026logical, lee2026logical}. Finding such canonical bases is an interesting topic in its own right, but beyond the scope of this work.

\begin{lemma}
    Consider any subset $\mathcal{I} \subset [k]$, and let $\mathcal{U}_\mathcal{I}$ be the set of qubits in the support of $\bigcup_{i \in \mathcal{I}} v_i$; that is, we consider the union of the supports of all the $\bar{X}$ logicals labelled by elements in $\mathcal{I}$. Then,
    \[\supp(v_j) \cap \mathcal{U}_\mathcal{I} \neq \supp(v_j), \ \forall j \not\in \mathcal{I}.\]
\end{lemma}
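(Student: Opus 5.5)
The idea is to exhibit, for each $j \notin \mathcal{I}$, one qubit in $\supp(v_j)$ that lies outside $\mathcal{U}_\mathcal{I}$. The natural witness is a qubit shared by $v_j$ and its paired $\bar{Z}$ representative $u_j$. The proof is a direct argument by contradiction using the pairing and disjointness conditions of a union-addressable basis (Def.~\ref{def:uab}).

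Fix $j \notin \mathcal{I}$ and suppose, for contradiction, that $\supp(v_j) \subseteq \mathcal{U}_\mathcal{I} = \bigcup_{i\in\mathcal{I}} \supp(v_i)$.

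First, I will show that $\supp(u_j)\cap\supp(v_j) \neq \emptyset$. Since $[u_j]\cdot[v_j] = 1$, the bilinear form evaluated on the chosen representatives gives $u_j \cdot v_j = 1 \in \F_2$; this is well defined on homology, so it equals the value on any representatives. Hence $|\supp(u_j)\cap\supp(v_j)|$ is odd, and in particular nonzero. The third condition of Def.~\ref{def:uab} also asserts this intersection is nonempty directly, so either route works.

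Next, pick a qubit $q \in \supp(u_j)\cap\supp(v_j)$. By the assumption, $q \in \supp(v_i)$ for some $i \in \mathcal{I}$. Since $j\notin\mathcal{I}$, we have $i \neq j$. Then $q \in \supp(u_j)\cap\supp(v_i)$, so this intersection is nonempty. This contradicts the union-addressable condition that $\supp(u_j)\cap\supp(v_i) = \emptyset$ whenever $i\neq j$.

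Therefore some qubit of $\supp(v_j)$ lies outside $\mathcal{U}_\mathcal{I}$, which gives the required strict non-equality $\supp(v_j)\cap \mathcal{U}_\mathcal{I} \neq \supp(v_j)$.

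There is no real obstacle here. The only point needing care is to apply the ``iff'' condition to the specific representatives $u_j, v_i$ rather than to homology classes, since supports depend on the chosen representatives.
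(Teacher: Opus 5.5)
Your proof is correct and follows essentially the same route as the paper: assume $\supp(v_j)\subseteq\mathcal{U}_\mathcal{I}$, use $[u_j]\cdot[v_j]=1$ to get a qubit in $\supp(u_j)\cap\supp(v_j)$, place it in some $\supp(v_i)$ with $i\in\mathcal{I}$, $i\neq j$, and contradict the disjointness condition of Def.~\ref{def:uab}. Your explicit choice of the witness qubit $q$ and your remark that the condition applies to the chosen representatives make the paper's terse argument slightly more precise.
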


In other words, each logical not in the chosen subset has some support outside of the union $\mathcal{U}_\mathcal{I}$.

\proof
Let $\supp(v_j) \cap \mathcal{U}_\mathcal{I} = \supp(v_j)$ for some $j \not\in \mathcal{I}$. Then ${\rm supp}(u_j) \cap {\rm supp}(v_j) \neq \emptyset$, as $[u_j]\cdot[v_j] = 1$, and so they must overlap.
But then $\exists v_i$ such that $i \in \mathcal{I}$, and ${\rm supp}(u_j) \cap {\rm supp}(v_i) \neq \emptyset$, while $i \neq j$, which is a contradiction. So $\supp(v_j) \cap \mathcal{U}_\mathcal{I} \neq \supp(v_j)$.
\endproof

\begin{corollary}\label{cor:uab_intersection}
    Let $b = \sum_{j \in \mathcal{K}} v_j+s \in C^1$ such that $\mathcal{K}\not\subseteq\mathcal{I}$ and $s \in \im \delta^0_C$ is a stabiliser. Then,
    \[\supp(b) \cap \mathcal{U}_\mathcal{I} \neq \supp(b).\]
\end{corollary}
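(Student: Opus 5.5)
We reduce Corollary~\ref{cor:uab_intersection} to the preceding lemma. The approach is to exhibit a single qubit in $\supp(b)$ lying outside $\mathcal{U}_\mathcal{I}$, witnessed by the $\bar Z$ representative $u_j$ for some $j \in \mathcal{K}\setminus\mathcal{I}$. Such a $j$ exists because $\mathcal{K}\not\subseteq\mathcal{I}$. We then compute the pairing of $b$ with $u_j$ in two ways: once in homology, and once as a sum over qubits split according to $\mathcal{U}_\mathcal{I}$.

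First, evaluate the pairing homologically. Since $s \in \im\delta^0_C$ and $u_j \in \ker(\delta^0_C)^\top$, we have $u_j\cdot s = 0$. By the duality $[u_j]\cdot[v_i] = \delta_{i,j}$, we get $u_j\cdot b = \sum_{i\in\mathcal{K}} \delta_{i,j} = 1$, because $j\in\mathcal{K}$.

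Second, evaluate the same pairing coordinate-wise. Write $u_j \cdot b = \sum_{q} (u_j)_q b_q$ and split the sum into qubits $q\in\mathcal{U}_\mathcal{I}$ and $q\notin\mathcal{U}_\mathcal{I}$. By the third property of Def.~\ref{def:uab}, $\supp(u_j)\cap\supp(v_i)=\emptyset$ for every $i\in\mathcal{I}$, since $i\neq j$ when $j\notin\mathcal{I}$. Hence $\supp(u_j)\cap\mathcal{U}_\mathcal{I}=\emptyset$, and the part of the sum over $\mathcal{U}_\mathcal{I}$ vanishes.

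Combining the two computations, the sum over $q\notin\mathcal{U}_\mathcal{I}$ equals $1$. Therefore some qubit $q\notin\mathcal{U}_\mathcal{I}$ has $b_q=1$, so $q \in \supp(b)\setminus\mathcal{U}_\mathcal{I}$, which gives $\supp(b)\cap\mathcal{U}_\mathcal{I}\neq\supp(b)$.

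The only delicate point is that the lemma's own style of argument, which just uses overlap of supports, does not directly survive adding a stabiliser $s$ or summing several $v_j$. Cancellations mod 2 could destroy any particular overlap, so a purely support-based argument is not enough. Using the exact parity $u_j\cdot b=1$ avoids this problem. It requires only that the canonical pairing $H_1\otimes H^1$ is computed by the dot product on representatives, which is well-defined precisely because $u_j\perp\im\delta^0_C$.
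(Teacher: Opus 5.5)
Your proof is correct and is essentially the paper's argument. The paper also picks $k\in\mathcal{K}\setminus\mathcal{I}$, uses $[u_k]\cdot[b]=1$, and gets a contradiction with $\supp(u_k)\cap\supp(v_i)=\emptyset$ for all $i\in\mathcal{I}$; this is just the contrapositive of your direct split of $u_j\cdot b$ over $\mathcal{U}_\mathcal{I}$ and its complement (and, contrary to your closing remark, the paper's argument is parity-based too, not purely support-based).
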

\proof
By the same argument, $\exists k \in \mathcal{K}\backslash \mathcal{I}$ such that $[u_k]\cdot [b] = 1$, but then $\supp(u_k) \cap \supp(v_i) \neq \emptyset$ for some $v_i$ s.t. $i \in \mathcal{I}$, a contradiction.
\endproof

\begin{remark}
    Evidently Corollary~\ref{cor:uab_intersection} applies equally to $\bar{Z}$ operators. We claim that it also applies to any arbitrary set of single-qubit Pauli logical operators in the union-addressable basis, not just to all-$\bar{X}$ or all-$\bar{Z}$ operators.
\end{remark}

\begin{remark}
    While a union of supports of $\bar{X}$ logicals $\mathcal{U}_\mathcal{I}$ in a union-addressable basis cannot contain a nontrivial $\bar{X}$ logical from any other equivalence class, $\mathcal{U}_\mathcal{I}$ can contain an $X$ stabiliser, and therefore multiple representatives from the same equivalence class. See App.~\ref{app:uab} for an example.
\end{remark}

\subsubsection{Parallel surgery}\label{sec:expanders}

Consider a memory code $C$ with a set $\mathcal{I}$ of $t$ $\bar{X}$ logicals $\Lambda_i$ in a union-addressable basis. We can measure these logicals in parallel by introducing an ancilla system $A$ such that
$A^1 = \bigcup_{i \in [t]} {\rm supp}(\Lambda_i)$ and $\delta^1_A = \delta^1_C \restriction_{A^1}$, that is the $Z$-checks in the image are those incident to the qubits in $\bigcup_{i \in [t]} {\rm supp}(\Lambda_i)$.
There is an injective cochain map $f : A \rightarrow C$, such that $f^1 \simeq \begin{pmatrix}
    I \\ 0
\end{pmatrix}$ and the same for $f^2$, where $\simeq$ means ``up to row and column permutation''.

\[\begin{tikzcd}
    & A^1\arrow[r]\arrow[d, "f^1"] & A^2\arrow[d, "f^2"] \\
    C^0\arrow[r] & C^1\arrow[r] & C^2
\end{tikzcd}\]
We refer to any $A$ with chain map $f: A \rightarrow C$ such that $f$ has the above form as a \textit{subcode} of $C$. The distance of $A$ as a quantum code can be as low as 1.

The deformed code would then be ${\rm cone}(f)$. Crucially, this will measure only the $t$ nontrivial logical operators we wish to measure, and none others; as a consequence of Corollary~\ref{cor:uab_intersection}, any nontrivial logical operator which is not in $\mathcal{I}$ will have some support outside of $\im f^1$, and therefore cannot be converted into a stabiliser in the deformed code.

The problem is that this will not generally preserve the code distance. 
We therefore desire to construct an ancilla system $\chi$ which is a relative expander, with cochain map as follows,

\begin{equation}\begin{tikzcd}\label{eq:stack_diagram}
    & \chi^1\arrow[r]\arrow[d, "g^1"] & \chi^2\arrow[d, "g^2"] \\
    & A^1\arrow[r]\arrow[d, "f^1"] & A^2\arrow[d, "f^2"] \\
    C^0\arrow[r] & C^1\arrow[r] & C^2
\end{tikzcd}\end{equation}

and take ${\rm cone}(fg)$ as our coned code, i.e.

\[\cone(fg) = \begin{tikzcd}
\chi^1\arrow[r]\arrow[dr, "f^1g^1"] & \chi^2\arrow[dr, "f^2g^2"] & \\
C^0\arrow[r] & C^1\arrow[r] & C^2
\end{tikzcd}
\]
where direct sums are taken vertically.

If ${\rm im}H^1(fg)$ contains all $t$ $\bar{X}$ logicals then they will all still be measured, by the Snake lemma~\cite{ide2025fault, weibel1994introduction}. This is automatic if $f$ has the form above, and $H^1(g)$ is surjective, that is $g(\ker \delta^1_\chi) = \ker\delta^1_A$.

\begin{definition}[High-rate surgery]\cite[Def.~6]{zheng2025high}
    A surgery gadget is \textit{high rate} when it measures many logical operators simultaneously, normalised by the size of the ancilla system. That is, if $\dim \chi$ is the size of the ancilla system, then the \textit{rate} of the surgery procedure is 
    \[\frac{\dim f^1g^1(\ker \delta^1_\chi)}{\dim \chi},\]
    where $\dim \chi = \sum_{i}\dim \chi^i$.
\end{definition}

The concept of high-rate surgery pre-existed this definition~\cite{zhang2025time}. Note also that the rate of the surgery procedure does not necessarily dictate the spacetime overhead of logical operations: measuring extra stabilisers of the code by surgery does not generally help perform logical measurements; on the other hand, measuring \textit{specific} extra stabilisers of the code can speed up logical measurements by a factor of $\mathcal{O}(d)$~\cite{baspin2025fast,cowtan2025fast,chang2026constant,hillmann2025single}. In our case, we are not making use of `fast' surgery, so a more indicative metric would be
\[\frac{\dim \big(f^1g^1(\ker \delta^1_\chi)/f^1g^1(\ker \delta^1_\chi)\cap \im \delta^0_C\big)}{\dim \chi},\]
which removes stabilisers from the equation.

While any combination of logical measurements can arise from high-rate surgery, we use `high-rate surgery' colloquially to refer to the process of measuring a subcode~\cite{zheng2025high,bhardwaj2026high,zheng2026logical}, as opposed to an arbitrary parallel set of Pauli products~\cite{cowtan2026parallel}, which is substantially harder.

Assuming that the subcode $A$ to be measured is dictated in advance, $\dim f^1g^1(\ker \delta^1_\chi)$ is also predetermined, and so the only variable is $\dim \chi$, that is how expensive the ancilla system is. Given that $\dim \chi = \Omega(\dim A)$, as $g$ must be sparse for the cone to be LDPC, the question is how close we can get $\dim \chi$ to $\dim A$, while preserving fault distance of the measurement; the results of Sec.~\ref{sec:lift} demonstrate that we can always acquire
\[\frac{\dim \chi}{\dim A} = (\dim A)^{o(1)},\]
which is almost optimal, asymptotically.

\begin{remark}
    It may seem unnecessary to carry $A$ around in the definition; it does not appear in e.g. Ref.~\cite{ide2025fault}. It is helpful for our purposes, because later we can consider only $\chi$, $A$ and $g$, without incorporating any explicit information about $C$.
\end{remark}

\begin{lemma}\label{lem:distance_preservation}
    Let $\chi$ have relative expansion $\beta_\tau = 1$ at degree 1, with $\tau = d(C)$. Let $f^1g^1$ have column weight at most 1, and let $H^2(fg)$ be injective. Then,
    \[d^X({\rm cone}(fg)) \geq d(C).\]
\end{lemma}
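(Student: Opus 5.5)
The plan is to take an arbitrary nontrivial $X$-type logical of $\cone(fg)$, i.e.\ a cocycle $(a,c) \in \chi^2 \oplus C^1$ with $(a,c) \notin \im \delta^0_{\cone(fg)}$, and show $|a|+|c| \geq d(C)$. Since $\chi$ has length 1, $\delta^2_\chi = 0$. The cocycle condition is therefore exactly $f^2g^2 a + \delta^1_C c = 0$. The degree-0 differential is $\delta^0_{\cone(fg)}(x,c_0) = (\delta^1_\chi x,\, f^1g^1 x + \delta^0_C c_0)$. I will split into cases according to whether $a$ is a coboundary in $\chi$.

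\emph{Case $a \notin \im \delta^1_\chi$.} Then $[a] \neq 0$ in $H^2(\chi)$. However, $f^2g^2 a = \delta^1_C c$ is a coboundary of $C$, so $H^2(fg)[a] = 0$. This contradicts the assumed injectivity of $H^2(fg)$, so this case cannot occur.

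\emph{Case $a = \delta^1_\chi x$ for some $x \in \chi^1$.} Add the coboundary $\delta^0_{\cone(fg)}(x,0)$ to obtain the cohomologous element $(0, c')$ with $c' = c + f^1g^1 x$. The cocycle condition gives $\delta^1_C c' = 0$. Now pick $z \in \ker\delta^1_\chi$ attaining the minimum in the relative expansion bound (Def.~\ref{def:relative_exp}) with $\beta_\tau = 1$, $\tau = d(C)$. This gives
\[
|a| = |\delta^1_\chi x| \geq \min\{d(C),\, |f^1g^1(x-z)|\}.
\]
If the expansion is stated relative to $g$ rather than $fg$, this is where the column-weight-$\le 1$ hypothesis enters: since $f^1$ is an injective coordinate embedding, $|f^1g^1 y| \leq |g^1 y|$, so the bound transfers to $fg$. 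If the minimum is $d(C)$ we are done. Otherwise the triangle inequality gives
\[
|a|+|c| \geq |f^1g^1(x-z)| + |c' + f^1g^1 x| \geq |c' + f^1g^1 z|.
\]

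It remains to show that $c' + f^1g^1 z$ is a nontrivial cocycle of $C$, so that its weight is at least $d^X(C) \geq d(C)$. It is a cocycle because $\delta^1_C f^1g^1 z = f^2g^2\delta^1_\chi z = 0$. Suppose instead that $c' + f^1g^1 z = \delta^0_C c_0$. Then, using $\delta^1_\chi z = 0$, we get $(0,c') = \delta^0_{\cone(fg)}(z, c_0)$, and hence $(a,c) = \delta^0_{\cone(fg)}(x+z, c_0)$ is trivial, a contradiction.

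The main subtlety, and the step to check carefully, is choosing the correcting cocycle $z$ from the relative expansion bound so that the same $z$ both absorbs the weight of $a$ and keeps $c'+f^1g^1z$ outside $\im\delta^0_C$. This works because $f^1g^1 z$ is exactly a term that the cone renders trivial. Combining the two cases yields $d^X(\cone(fg)) \geq d(C)$.
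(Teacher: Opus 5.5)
Your proof is correct and follows the paper's route. Both arguments reduce to a representative $(\delta^1_\chi x,\ \Lambda + f^1g^1x)$ with $\Lambda$ a cocycle of $C$, take $z\in\ker\delta^1_\chi$ from relative expansion, and use the triangle inequality to get the lower bound $|\Lambda + f^1g^1 z| \geq d^X(C)$. The only difference is that you replace the paper's appeal to the Snake lemma (every $\bar{X}$ logical of the cone is equivalent to an unmeasured logical of $C$, and $[f^1g^1 z]\neq[\Lambda]$ since otherwise $\Lambda$ would be measured) with an explicit case split on whether $a\in\im\delta^1_\chi$ and a direct coboundary construction, which makes the nontriviality step fully explicit.
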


\proof
First, as $H^2(fg)$ is injective, every $\bar{X}$ logical of ${\rm cone}(fg)$ is stabiliser-equivalent to an unmeasured $\bar{X}$ logical in $C$, by the Snake lemma. Consider an arbitrary unmeasured $\bar{X}$ logical $\Lambda$, which we consider a vector in $\ker(\delta^1_C)\backslash {\rm im}(\delta^0_C)$. We know that $|\Lambda| \geq d^X(C)$ and $|\Lambda + q| \geq d^X(C)\ \ \forall q \in \ker\delta^1_C$ such that $[q] \neq [\Lambda]$. Now, for any $x \in \chi^1$ we must ensure that $|\Lambda - f^1g^1(x)| + |\delta^1_\chi x| \geq d(C)$. 

\begin{align}
    &|\Lambda-f^1g^1(x)| + |\delta^1_\chi x| \\
    & \geq |\Lambda - f^1g^1(x)| + \min\{d(C), \min_{z \in \ker(\delta^1_\chi)} |f^1g^1(x - z)|\}. \label{eq:min_c}
\end{align}

If the first $\min$ function evaluates to $d(C)$ then we are done. Otherwise, by commutation of the diagram in Eq.~\ref{eq:stack_diagram}, $f^1g^1(z) \in \ker(\delta^1_C)$ and $[f^1g^1(z)] \neq [\Lambda]$, as otherwise $\Lambda$ would be measured. Therefore, 
\[|\Lambda - f^1g^1(z)| \geq d^X(C)\ \ \forall z \in \ker(\delta_\chi^1).\]
Hence, for the minimal $z$ in Eq.~\ref{eq:min_c},
\begin{align*}
    &|\Lambda - f^1g^1(x)| + |f^1g^1(x - z)| \\
    &= |\Lambda - f^1g^1(x-z) - f^1g^1(z)| + |f^1g^1 (x - z)| \\
    &\geq |\Lambda - f^1g^1(z)| - |f^1g^1(x-z)| + |f^1g^1 (x - z)| \\
    &= |\Lambda - f^1g^1(z)| \geq d^X(C).
\end{align*}
Then, $d^X(C) \geq d(C)$.
\endproof

\begin{corollary}\label{cor:distance_preservation}
    Let $\chi$ have relative expansion $\beta_\tau = 1$ at degree 1, let $\tau = d(C)$, and let $f^1g^1$ have column weight at most 1. Let $H^2(fg)$ be injective. Then $d({\rm cone}(fg)) \geq d(C)$.
\end{corollary}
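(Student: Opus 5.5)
The plan is to combine Lemma~\ref{lem:distance_preservation}, which already gives $d^X(\cone(fg)) \geq d(C)$, with a separate bound on the $Z$-distance. Since $d(\cone(fg)) = \min(d^X, d^Z)$, it suffices to show $d^Z(\cone(fg)) \geq d(C)$.

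For the $Z$-distance, work with the dual (chain) complex of $\cone(fg)$. Its degree-1 component is $\chi^2 \oplus C^1$. A $\bar Z$ logical is a vector $(a,c)$ with $a \in \chi^2$, $c \in C^1$, lying in $\ker(\delta^0_{\cone})^\top$ and not in $\im(\delta^1_{\cone})^\top$. Transposing the cone differential
\[\delta^0_{\cone} = \begin{pmatrix} \delta^1_\chi & 0 \\ f^1g^1 & \delta^0_C \end{pmatrix}\]
gives the two conditions $(\delta^1_\chi)^\top a + (f^1g^1)^\top c = 0$ and $(\delta^0_C)^\top c = 0$. The second says $c \in \ker(\delta^0_C)^\top$, so $c$ is a $Z$-type logical or stabiliser of $C$.

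The key steps are as follows.

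\begin{enumerate}
\item Rule out $c \in \im(\delta^1_C)^\top$. Suppose $c = (\delta^1_C)^\top s$. Subtract the boundary of $(f^2g^2)$-lifted data: the degree-2 component of the chain complex is $C^2$, and $(\delta^1_{\cone})^\top$ applied to $s \in C^2$ gives $((f^2g^2)^\top s, (\delta^1_C)^\top s)$. After this subtraction we may assume $c=0$. Then $a \in \ker(\delta^1_\chi)^\top$ with $a \in \chi^2$, and $a$ is nontrivial in the cone homology. The injectivity of $H^2(fg)$ is exactly what should forbid such classes. By the long exact sequence from Eq.~\ref{eq:ses}, the connecting map and the dual statement (homology of $A[1]$ at top degree) mean that surviving $\chi^2$-only $Z$-classes correspond to the cokernel or kernel of $H^2(fg)$. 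With injectivity, these are trivial, and one also uses that $\chi$ has nothing in degree 3.
\item Handle $c$ a nontrivial $Z$ logical of $C$. Then $|(a,c)| \geq |c| \geq d^Z(C) \geq d(C)$ immediately, because stabiliser-adding in the cone only changes the representative within its class. The minimal weight over the class is therefore still bounded below by the minimal weight of $c$'s class in $C$, which is at least $d^Z(C)$. Every coset representative has its $C^1$ part in the same $C$-class $[c]$, since cone boundaries project onto $C$-boundaries.
\end{enumerate}

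The main obstacle is step~1: showing via the snake lemma that injectivity of $H^2(fg)$ kills every $\bar Z$ logical supported purely on $\chi^2$ modulo boundaries. The first thing to try is to dualise the short exact sequence to $0 \to A[1]^* \to \cone^* \to C^* \to 0$. One then reads off that the new $Z$ homology classes in the cone come from $\mathrm{coker}$ of the dual map, which vanishes precisely when $H^2(fg)$ is injective. It is also worth checking that the column-weight condition on $f^1g^1$ is not needed for the $Z$ part.
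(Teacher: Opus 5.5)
Your proof is correct, and it uses the same split as the paper: $d^X(\cone(fg))\ge d(C)$ comes from Lemma~\ref{lem:distance_preservation}, and the $Z$-distance is handled separately. The only difference is on the $Z$ side. The paper settles it in one line by citing \cite[Thm.~1]{ide2025fault} for $d^Z(\cone(fg))\ge d^Z(C)$, whereas you prove it directly.

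Your direct route is worth having because it shows which hypothesis carries the $Z$ bound. Step~1 can be tightened to a one-line duality argument, with no appeal to the dualised long exact sequence. Since $\chi^3=C^3=0$, you have $\ker(\delta^1_\chi)^\top=(\im\delta^1_\chi)^\perp\cong H^2(\chi)^*$ and $\ker(\delta^1_C)^\top\cong H^2(C)^*$. Injectivity of $H^2(fg)$ is therefore equivalent to $(f^2g^2)^\top$ mapping $\ker(\delta^1_C)^\top$ onto $\ker(\delta^1_\chi)^\top$. So your leftover $a'$ equals $(f^2g^2)^\top s'$ for some $s'$ with $(\delta^1_C)^\top s'=0$, and $(a',0)=(\delta^1_{\cone})^\top s'$ is a stabiliser.

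This also makes clear that injectivity of $H^2(fg)$ is needed for the $Z$-distance as well, not only for the $X$-side argument of Lemma~\ref{lem:distance_preservation}. Without it, vectors of $\ker(\delta^1_\chi)^\top$ outside that image give new, possibly low-weight, $\bar Z$ logicals supported only on $\chi^2$. That is consistent with the paper needing the extra detectors in Lemma~\ref{lem:no_gaugefix} when injectivity is dropped.

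Once step~1 is in place, step~2 needs no extra coset argument. Every nontrivial $\bar Z$ logical $(a,c)$ of the cone has $c\in\ker(\delta^0_C)^\top\setminus\im(\delta^1_C)^\top$, so $|(a,c)|\ge|c|\ge d^Z(C)\ge d(C)$. You are also right that the column-weight condition on $f^1g^1$ plays no role in the $Z$-bound. In short, the citation buys brevity, and your argument buys a self-contained proof in which the role of each hypothesis is visible.
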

\proof
Recall that $d({\rm cone}(fg)) := \min \{d^X({\rm cone}(fg)), d^Z({\rm cone}(fg))\}$. By \cite[Thm.~1]{ide2025fault}, $d^Z({\rm cone}(fg)) \geq d^Z(C)$, so we are done.
\endproof

That the logical measurement has phenomenological fault distance at least $d(C)$ follows by separating spacelike and timelike faults~\cite{williamson2026low,cowtan2025fast}. In this work, we are going to relax the condition that $H^2(fg)$ is injective.

\begin{restatable}{lemma}{faultdistance}\label{lem:no_gaugefix}
    Let $\chi$ have relative expansion $\beta_\tau = 1$ at degree 1 with $\tau = d(C)$, and let $f^1g^1$ have column weight at most 1. Then, regardless of the content of $H^2(\chi)$, the measurement procedure defined by $\cone(fg)$ has phenomenological fault distance $d(C)$, when the logical measurement is performed for $d(C)$ syndrome rounds.
\end{restatable}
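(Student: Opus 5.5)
We work in the spacetime (phenomenological) picture. The aim is to show that every undetectable spacetime fault configuration that causes a logical error has weight at least $d(C)$. The difference from Corollary~\ref{cor:distance_preservation} is that $H^2(fg)$ need not be injective. Non-injectivity means the cone can contain new $\bar{X}$-type logical classes, carried by $H^2(\chi)$, that are not stabiliser-equivalent to anything in $C$. These classes are the ``gauge'' degrees of freedom: they are never initialised into a definite state and are read out at the end when $\chi^1$ is measured in the $Z$ basis.

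The plan has three steps. First, identify which operators are genuine logical errors for the procedure. An $\bar{X}$-type operator on $\cone(fg)$ is harmful only if its restriction to $C$ after measuring out $\chi$ is a nontrivial unmeasured $\bar{X}$ logical of $C$. Likewise, a $\bar{Z}$-type operator is harmful if it anticommutes with an unmeasured $\bar{X}$ logical of $C$, or if it flips the measured logical outcome. Operators acting only on the new gauge classes of $H^2(\chi)$ have no effect. Those qubits start in $\ket{0}$ and are measured in $Z$, so an $\bar{X}$ gauge logical supported there only changes outcomes inside $\chi$ that are not part of the logical record.

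Second, handle spacelike faults using the argument of Lemma~\ref{lem:distance_preservation}. That argument used injectivity of $H^2(fg)$ only to reduce an arbitrary $\bar{X}$ logical of the cone to the form $(x,\Lambda - f^1g^1(x))$ with $\Lambda$ an unmeasured $C$-logical. Here we keep exactly this class of harmful operators and discard gauge operators. For these, the bound $|\Lambda - f^1g^1(x)|+|\delta^1_\chi x|\ge d(C)$ follows verbatim from the relative expansion with $\beta_\tau=1$ and $\tau=d(C)$. The column-weight-1 condition on $f^1g^1$ is used just as before, for the triangle inequality. The $Z$ side comes from \cite[Thm.~1]{ide2025fault}, which needs no injectivity.

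Third, handle timelike and mixed faults by the standard separation \cite{williamson2026low,cowtan2025fast}. Project a spacetime fault onto the time slices. A timelike logical error, meaning a wrong inferred product of $\bar{X}$ measurement outcomes, requires measurement errors on new $X$-checks, which are the rows of $\chi^1$ coned in. These errors must persist across all $d(C)$ rounds, or else be compensated by $Z$-basis initialisation and readout errors on $\chi$ qubits at both time boundaries. Either way this costs at least $d(C)$ faults, giving $\ge d(C)$ when the measurement runs for $d(C)$ rounds. Spacelike components are cleaned, as usual, by pushing them to a single time slice, where the second step applies.

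The main obstacle is the gauge issue in the timelike argument. Since $H^2(fg)$ is not injective, products of new $X$-checks can equal products involving $Z$-basis readouts in nontrivial ways. So I must verify that the measured logical value is inferred only from combinations whose detector structure forces the $d(C)$-round repetition, and that flipping a gauge class does not silently change the inferred outcome. I would first try to show the following: each measured logical $f^1g^1(z)$, with $z\in\ker\delta^1_\chi$, is inferred as a product of $X$-check outcomes in each round. The gauge ambiguity then only multiplies this product by stabilisers of $C$ whose values are already known from the preceding rounds. Consequently, consistent relabelling across rounds leaves the inferred outcome fixed, and any change to it must come from at least $d(C)$ faults.
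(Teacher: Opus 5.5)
There is a genuine gap, and it sits exactly where this lemma differs from Corollary~\ref{cor:distance_preservation}. In your first step you claim that the new $\bar X$ classes coming from $\ker H^2(fg)\subseteq H^2(\chi)$ ``have no effect'', because the $Z$-readouts of $\chi$ are ``not part of the logical record''. That claim does not hold. A representative of such a class is $(y,c)\in\chi^2\oplus C^1$ with $f^2g^2y=\delta^1_C c$ and $y\notin\im\delta^1_\chi$. In general it is not supported on $\chi$ alone, and $c$ is fixed by $y$ only up to $\ker\delta^1_C$. So $(y,c)$ and $(y,c+\Lambda)$, with $\Lambda$ an unmeasured logical of $C$, both commute with every check of $\cone(fg)$. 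They also both pass the detectors linking the last deformed $Z$-checks to the first post-surgery checks of $C$: the flip $\delta^1_C c$ of the code checks is cancelled by the flip $f^2g^2y$ of the readout parities. The ancilla readout $m+y$ is what fixes the frame correction on $C$, and it cannot distinguish these two faults. If such faults went undetected, one of them would leave an unmeasured logical error. Nothing in your argument stops that error from having weight well below $d(C)$; for instance, about $|y|+d(C)/2$ when $c$ and $c+\Lambda$ split $\Lambda$ evenly. Your second-step reduction to operators of the form $(\delta^1_\chi x,\Lambda-f^1g^1x)$ therefore drops exactly the mixed operators that are dangerous.

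The missing idea is that these classes are harmless because they are \emph{detected}, not because the readouts are irrelevant. For every $w\in\ker(\delta^1_\chi)^\top$, the operator $Z^w$ on $\chi^2$ commutes with every check measured during the surgery. It has value $+1$ after the $\ket{0}$ preparation and is revealed by the final $Z$-readout. The parities $w\cdot m$ thus give a $\dim H^2(\chi)$-dimensional family of detectors spanning the whole window. Any fault whose net ancilla component lies outside $\im\delta^1_\chi$ trips one of them. This is the paper's key step. It adjoins $\Xi^4\cong H^3(P\otimes\chi)\cong H^2(\chi)$ to the fault complex $\cone(\Upsilon)$, with $\Upsilon=\iota\otimes fg$, so that $H^3(\Xi)=0$. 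The long exact sequence then shows that the $\bar X$ fault classes are just the unmeasured classes of the idling volume. Only after this does the relative-expansion argument of Lemma~\ref{lem:distance_preservation} go through verbatim.

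Your ``main obstacle'' paragraph looks for the issue in the wrong place. $H^2(\chi)$ concerns $X$-type faults and $Z$-type detectors, not how $X$-check outcomes are combined into logical outcomes. The ambiguity you propose to handle there, $z\in\ker\delta^1_\chi$ with $f^1g^1z\in\im\delta^0_C$, is $\ker H^1(fg)$. That only produces extra timelike classes of length at least $d(C)$ and is harmless. Likewise, $Z$-basis preparation and readout errors on $\chi$ are $X$-type, so they cannot compensate $X$-check measurement errors. Their only role is the detection described above.
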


\begin{remark}
    This lemma follows immediately from the paragraph below \cite[Thm.~1]{williamson2026low} in the Supplementary Material. Strictly speaking, the spacetime distance proof there is for the case where $\chi$ represents a graph, rather than a more general relative expander, but the proof does not require $\chi$ to be a graph, just a relative expander, so carries over straightforwardly. We prove this formally in App.~\ref{app:proofs}.
\end{remark}

The idea is that there may be new low weight spacelike logicals in $\cone(fg)$, but they cannot produce logical spacetime faults without violating the detectors introduced during the surgery procedure. That is, the deformed code $\cone(fg)$ may have $d({\rm cone}(fg)) < d(C)$; however, because the measurement procedure for $\bar{X}$ logicals initialises all new data qubits in $\ket{0}$ and measures them out in the $Z$ basis, there are sufficient detectors to ensure that the fault distance is not lowered. 

We could equivalently compute $\mathrm{coker}(\delta^1_\chi)$ and introduce a new $\chi^3 \cong \mathrm{coker}(\delta^1_\chi)$ with the projection map $\delta^2_\chi: \chi^2 \rightarrow \chi^3$, giving a set of (potentially dense) $Z$-checks such that $H^2(\chi) = 0$, and so $H^2(fg)$ is injective. Then, instead of measuring the checks at each round, leading to a dense code, infer the corresponding detectors, which span all $d(C)$ rounds, from the initial and final measurements.

\begin{remark}
    The detectors induced by the initialisation and measurement span the full $d$ rounds, and overlap substantially on each round, and so are not generally sparse. This may present a problem for fast real-time decoding, as the decoder cannot finish decoding the first round of the surgery procedure before the last round is completed, and the decoding graph will not be sparse.
\end{remark}

\begin{remark}
    We believe that the procedure can be generalised to yield sparse detectors without further asymptotic overhead. As this results in the construction becoming extremely complicated, and requires much more exposition about linear PCPPs, we leave this for further work.
\end{remark}

\section{Lifting to relative expanders}\label{sec:lift}

The aim of this section is to prove the following lemma.

\sparseexpander*

To prove Lemma~\ref{lem:sparse_relative_expander}, we will have to take a slightly circuitous route. We summarise the construction below:
\begin{itemize}
    \item Convert $\delta^1_A$ into a linear circuit over a field extension $K = \F_{2^k}$.
    \item Use a linear PCPP to produce a series of tester circuits for the circuit acquired from $\delta^1_A$.
    \item Convert those tester circuits back into a matrix.
    \item Restrict that matrix back to $\F_2$.
    \item Perform various sparsification procedures on that matrix to get a final matrix $D$ with useful expansion properties.
    \item Insert $D$ into a cochain complex to get $\chi$.
\end{itemize}

\subsection{Constructing a linear circuit from $A$}\label{sec:circuit_construction}
Our first order of business is to translate $\delta^1_A : \F_2^\mu \rightarrow \F_2^\nu$, which is a boolean matrix, into a linear circuit $\mathcal{A}$, so that we may apply Thm.~\ref{thm:pcpp} to it and acquire a PCPP verifier. Converting $\delta^1_A$ into a linear circuit over $\F_2$, with size $\mathcal{O}(\mu)$, is immediate, as $\delta^1_A$ is LDPC. However, Thm.~\ref{thm:pcpp} has a minimal field size of $\tilde{\mathcal{O}}(\sqrt{\mu})$, meaning that our field must be larger than that, and $2$ is constant, so we cannot use $\F_2$ directly.

To this end, we first regard $\delta^1_A$ as a matrix over a field extension $\F_{2^k}$, where $k = \lceil\log \mu \rceil$; the idea being that for some sufficiently large $\mu$, $2^{\lceil \log \mu \rceil} \geq \mu$ is sufficiently large for Thm.~\ref{thm:pcpp}, see Rem.~\ref{rem:minimal_field}. \footnote{If $\mu$ is not sufficiently large for a given finite field size then we can pad the field size at only constant factor cost.} We simply replace each $1_{\F_2}$ in $\delta^1_A$ with $1_K$, which (over $\F_2$) is a $k\times k$ identity matrix, letting $\{1, \alpha_2, \cdots, \alpha_k\}$ be a basis for $\F_{2^k}$. Each row of $\delta^1_A$ computes $(\delta^1_A a)_i = \sum_{j: (\delta^1_A)_{ij} = 1}a_j$, just as it would over $\F_2$, but now the sum is in $K$.  Note that $\ker_{\F_2} \delta^1_A \subseteq \ker_K \delta^1_A$, so we do not `lose' any kernel elements by this field extension, and in fact,

\begin{restatable}{lemma}{kerneleq}\label{lem:kernel_eq}\cite[Lem.~2.22]{randriambololona2014products}
    \[ \ker_K \delta^1_A = \ker_{\F_2}\delta^1_A\otimes_{\F_2}K.\]
\end{restatable}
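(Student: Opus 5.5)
The plan is to prove the two inclusions separately, writing $M=\delta^1_A$ for the $\nu\times\mu$ matrix with entries in $\F_2\subseteq K$. The inclusion $\ker_{\F_2}M\otimes_{\F_2}K\subseteq\ker_K M$ is immediate. Tensoring with $K$ sends an $\F_2$-combination of kernel vectors to a $K$-combination of them. Since $M$ is $K$-linear and kills each $\F_2$-kernel vector, it kills every $K$-linear combination.

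For the reverse inclusion I would use a basis decomposition, which is the cleanest route. Fix the $\F_2$-basis $\{1,\alpha_2,\dots,\alpha_k\}$ of $K$ chosen above. Any $a\in K^\mu$ can then be written uniquely as
\[a=\sum_{s=1}^{k}\alpha_s a^{(s)},\qquad a^{(s)}\in\F_2^\mu,\ \alpha_1=1.\]
Since $M$ has entries in $\F_2$, each row computes a sum of coordinates of $a$, so
\[Ma=\sum_{s}\alpha_s\,Ma^{(s)},\qquad Ma^{(s)}\in\F_2^\nu.\]
If $Ma=0$, look at each coordinate $i$: the equation $\sum_s\alpha_s (Ma^{(s)})_i=0$ is a vanishing $\F_2$-linear combination of the basis elements $\alpha_s$. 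By linear independence, every coefficient $(Ma^{(s)})_i$ must vanish. Hence $a^{(s)}\in\ker_{\F_2}M$ for all $s$, and therefore $a\in\ker_{\F_2}M\otimes_{\F_2}K$. Here we identify $\ker_{\F_2}M\otimes_{\F_2}K$ with the $K$-span of $\ker_{\F_2}M$ inside $K^\mu$; this identification is injective because $K$ is flat over a field.

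As an alternative check, one could compare dimensions. Rank is invariant under field extension, since Gaussian elimination over $\F_2$ is also valid over $K$. So $\dim_K\ker_K M=\mu-\mathrm{rank}M=\dim_{\F_2}\ker_{\F_2}M$, and combined with the first inclusion this gives equality.

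There is no real obstacle here. The only point needing care is the identification of the abstract tensor product with the $K$-span in $K^\mu$, which is standard for vector spaces over a field.
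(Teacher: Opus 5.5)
Your proof is correct and is essentially what the paper relies on: the paper gives no argument of its own and simply cites Randriambololona's Lem.~2.22. Your basis-decomposition argument yields exactly the coordinate description $\ker_K\delta^1_A=\{\sum_j\alpha_j c^{(j)} : c^{(j)}\in\ker_{\F_2}\delta^1_A\}$ that the paper then uses in the restriction-of-scalars step and in the appendix proof of the binary soundness lemma. Your dimension count is also valid, and together with the easy inclusion it would suffice on its own.
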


We then convert $\delta^1_A$ into a linear circuit $\mathcal{A}$ over $K$. Each row of $\delta^1_A$ becomes a binary tree of $XOR$s over $K$, and each column is a fan-out tree of inputs. As $\delta^1_A$ is LDPC, breaking up the fan-in and fan-out to be bounded above by 2 has constant cost. Therefore $\mathcal{A}$ has size $\mathcal{O}(\mu)$, and we can think of $\SAT(\mathcal{A})$ as $\ker_K(\delta^1_A)$.

We can now apply Thm.~\ref{thm:pcpp} directly to $\mathcal{A}$. Exhausting the possibilities for $\omega$, Thm.~\ref{thm:pcpp} generates $T = 2^{r(\mu)}$ circuits $\{\psi_\omega: K^{|I_\omega|} \rightarrow K^{q_\omega}\}$. To recap, we have:
\begin{itemize}
    \item $r = \log \mu + \mathcal{O}((\log \log \mu)^2)$,
    \item $\ell = \mu (\log \mu)^{\mathcal{O}(\log \log \mu})$,
    \item $d = \mathcal{O}(1)$,
    \item $\rho = \Omega(\frac{1}{\mathrm{polylog}(\mu)})$,
    \item $T = 2^{\log \mu + \mathcal{O}((\log \log \mu)^2)} = \mu 2^{\mathcal{O}((\log \log \mu)^2)} = \mu(\log \mu)^{\mathcal{O}(\log \log \mu)}.$
\end{itemize}

\subsection{Constructing a matrix from the PCPP}
We prove a helpful lemma to convert the set of linear circuits from Thm.~\ref{thm:pcpp} back to a matrix.

\begin{lemma}[Matrix realisation]\label{lem:matrix_realisation}
    Let $K = \F_{2^k}$. Let $\mathcal{A} : K^\mu \rightarrow K^\nu$ have a linear PCPP, with $T = 2^{r(\mu)}$. Let $\SAT(\mathcal{A}) \subseteq K^\mu$. Then there exists a matrix $\hat{D} : K^\mu \oplus K^{\ell(\mu)} \oplus K^{\eta} \rightarrow K^{\vartheta}$ such that
    \begin{itemize}
        \item $\hat{D}$ has $\mathcal{O}(Td)$ rows and columns.
        \item For every $a \in K^\mu$, $\pi \in K^{\ell(\mu)}$, $g \in K^{\eta}$,
        \[|\hat{D}(a,\pi, g)| \geq \frac{\rho T}{\mu}\dist(a, \SAT(\mathcal{A})).\]
        \item For every $a \in \SAT(\mathcal{A})$, $\exists \pi, g$ such that
        \[\hat{D}(a, \pi, g) = 0.\]
    \end{itemize}
    The rows of $\hat{D}$ have weight bounded above by a constant, but the columns do not in general.
\end{lemma}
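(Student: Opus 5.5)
The plan is to realise each tester circuit $\psi_\omega$ as a small block of linear equations over $K$, using auxiliary variables for its internal wires, and then stack the $T$ blocks into $\hat{D}$. Fix $\omega \in \{0,1\}^{r(\mu)}$. The circuit $\psi_\omega$ has at most $d$ wires and fan-in at most 2. Introduce one fresh variable in $K$ for every non-input wire of $\psi_\omega$; these variables, collected over all $\omega$, form the coordinates $g \in K^\eta$, so $\eta \le Td$. For each gate with output wire $w$ and inputs $w_1,w_2$ computing $c_1 w_1 + c_2 w_2$, add a row encoding $g_w - c_1 y_{w_1} - c_2 y_{w_2} = 0$. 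Here $y_{w_i}$ is either a coordinate of $(a,\pi)$ indexed through $I_\omega$, when $w_i$ is an input wire, or a $g$-variable otherwise. For each output wire $o$ of $\psi_\omega$, add a row encoding $y_o = 0$. Every row has at most 3 nonzero entries, so row weights are constant. The number of rows is at most $2d$ per $\omega$, giving $\vartheta = \mathcal{O}(Td)$. The columns number $\mu + \ell + \eta$, and since $T \ge \mu$ and $\ell = T^{1+o(1)}$ (or, after noting that unqueried proof coordinates can be dropped, $\ell \le Td$), this is also $\mathcal{O}(Td)$. Column weights of the $(a,\pi)$ part are unbounded, because a coordinate may be queried by many $\omega$.

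Completeness is immediate. If $a \in \SAT(\mathcal{A})$, take the proof $\pi$ supplied by the completeness property of Def.~\ref{def:linear_pcpp}. Set each $g_w$ to the true value of wire $w$ when $\psi_\omega$ is evaluated on $(a\circ\pi)_{\restriction I_\omega}$. All gate rows then vanish by construction, and all output rows vanish because $\psi_\omega$ accepts for every $\omega$.

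For soundness, the key observation is that the rows belonging to distinct $\omega$ involve disjoint sets of $g$-variables. Fix arbitrary $(a,\pi,g)$ and let $\omega$ be a string for which $\psi_\omega$ rejects $(a\circ\pi)_{\restriction I_\omega}$. I claim that at least one row in block $\omega$ is nonzero. Suppose all gate rows in the block vanish. Then by induction along a topological order of the circuit, which is acyclic, each $g_w$ equals the true wire value. Since some output is nonzero, the corresponding output row is then nonzero.

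It follows that $|\hat{D}(a,\pi,g)| \ge \#\{\omega : \psi_\omega \text{ rejects}\}$. By Rem.~\ref{rem:pcpp_probs} and the soundness property, this count is
\[
2^{r}\cdot \Pr_\omega[\text{reject}] \ \ge\ \frac{\rho T}{\mu}\dist(a,\SAT(\mathcal{A})),
\]
where the PCPP input size $n$ is $\Theta(\mu)$ and the constant is absorbed into $\rho$.

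The main obstacle is bookkeeping rather than any deep step. One must make sure the $g$-variables are private to each $\omega$: if wire variables were shared across blocks, a single row could fail to witness several rejections. One must also handle the discrepancy between $n = \mathcal{O}(\mu)$ and $\mu$ in the soundness bound, which only rescales $\rho$ by a constant. With private variables per block, the induction argument shows that each rejecting block contributes at least one nonzero row.
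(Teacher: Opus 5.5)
Your proposal is correct and takes the same route as the paper. You build one block of gate and output-wire checks per $\omega$, with private internal-wire variables, get completeness by evaluating each circuit, and get soundness by showing every rejecting block violates at least one row, then counting via Rem.~\ref{rem:pcpp_probs}. Your topological-order induction, your absorption of the $n=\Theta(\mu)$ constant into $\rho$, and your dropping of unqueried proof coordinates to justify the $\mathcal{O}(Td)$ column count make explicit details the paper leaves implicit.
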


\proof
For each random string $\omega \in \{0,1\}^r$, the linear PCPP verifier reads some coordinates of $(a, \pi)$, that is a subspace $W_\omega \cong K^{|I_\omega|} \subseteq K^\mu \oplus K^{\ell(\mu)}$ and produces a linear circuit $\psi_\omega$ of size at most $d$, and with fan-in and fan-out bounded by 2 (see Def.~\ref{def:linear_circuit}).

For each circuit $\psi_\omega$, we associate a check matrix $G_\omega$ over $K$. We do this by introducing a new variable for each wire in $\psi_\omega$. Then, each gate in each circuit becomes a check: each gate represents a linear equation, which must be satisfied; for example, each XOR gate $z = x_1 + x_2$, over $K$, becomes the check $x_1+x_2+z = 0$, i.e. a row with 3 nonzero elements of $K$. Each copy gate $x=z$, performing fan-out, becomes $x+z = 0$, i.e. a row with 2 nonzero elements. As the circuit is over $K$, rather than $\F_2$, it may have other gate equations than XOR or copy, but they must all be linear and sparse.
Lastly, for each output wire $z$ of $\psi_\omega$, set $z = 0$, i.e. a weight-1 row. Each check has bounded weight, and each variable (corresponding to a column) interacts with only a constant number of checks, because $\psi_\omega$ has fan-in and fan-out bounded above by 2.

We call the input wires of $\psi_\omega$, coming from $W_\omega$, the \textit{external} wires, and the wires which are not inputs \textit{internal} wires, and set that space to be $K^{\eta}_\omega$. That is, the domain of $G_\omega$ is $W_\omega \oplus K_\omega^{\eta}$.

Then, observe that if $\psi_\omega$ accepts an input $(a, \pi)$, then there is some assignment of variables $g_\omega$ to internal wires such that $G_\omega(a, \pi, g_\omega) = 0$; simply trace through the circuit from inputs to outputs, assigning variables to internal wires according to the gates, and find that all outputs are 0. However, if $\psi_\omega$ rejects $(a, \pi)$ then
\[|G_\omega(a, \pi, g_\omega)| \geq 1,\ \forall g_\omega.\]
This is because, starting with a rejected input, to get 0 outputs and therefore not violate those checks, some gate equations must not be satisfied.

Doing this for each string $\omega$, we end up with a collection of matrices $G_\omega$; these have some common inputs, if the subspaces $W_\omega$ overlap, but also have some inputs with no overlap, being the $K^\eta_\omega$ spaces. We therefore stack the matrices together, yielding
$\hat{D} : K^\mu \oplus K^{\ell(\mu)} \oplus K^{\eta} \rightarrow K^{\vartheta},$
where $K^{\eta} = \bigoplus_\omega K^\eta_\omega$ and $K^\vartheta$ is the collection of all checks coming from gate equations.

As each circuit $\psi_\omega$ has size at most $d$, each matrix $G_\omega$ has $\mathcal{O}(d)$ rows and columns; there are $T$ such matrices, so $\hat{D}$ has $\mathcal{O}(Td)$ rows and columns. 

For a given input $(a, \pi)$ and any choice of internal variables $g = \sum_\omega g_\omega$, 
\[|\hat{D}(a, \pi, g)| \geq \#\{\omega : \psi_\omega((a, \pi)_{\restriction I_\omega})\ \mathrm{rejects}\},\]
as each $G_\omega$ corresponding to a circuit $\psi_\omega$ must yield at least one nonzero syndrome.

By Remark~\ref{rem:pcpp_probs}, 
\[\#\{\omega \in \{0,1\}^{r(\mu)}: \psi_\omega((a\circ \pi)_{\restriction I_\omega})\ \mathrm{rejects}\} = T \mathrm{Pr}_\omega\big[ \psi_\omega ((a\circ \pi)_{\restriction I_\omega})\ \mathrm{rejects}\big],\]
and $\mathrm{Pr}_\omega\big[ \psi_\omega ((a\circ \pi)_{\restriction I_\omega})\ \mathrm{rejects}\big] \geq \frac{\rho}{\mu}\dist(a, \SAT(\mathcal{A}))$, hence
\[|\hat{D}(a,\pi, g)| \geq \frac{\rho T}{\mu}\dist(a, \SAT(\mathcal{A})).\]

Lastly, for any $a \in \SAT(\mathcal{A})$, all circuits  $\psi_\omega$ accept the input $(a, \pi)$ for some choice of $\pi$, and therefore $\exists \pi,g $ such that
\[\hat{D}(a, \pi, g) = 0.\]
\endproof

We are going to make a very minor modification now. To achieve high relative expansion, we would like to get rid of the normalisation. We can do this by repeating each circuit $L := \lceil \frac{\mu}{\rho T} \rceil$ times, keeping the same external wires but adding new internal wires for each matrix. It is easy to show that now, $|\hat{D}(a,\pi, g)| \geq \dist(a, \SAT(\mathcal{A}))$. \footnote{We could achieve the same thing by thickening the complex we make right at the end by $L$, following Lemma~\ref{lem:thickening}, but doing it in this way now avoids carrying around extra variables in the subsequent steps and simplifies the diagram in Sec.~\ref{sec:the_cone}. Note also that for sufficiently large $\mu$, $\frac{\rho T}{\mu} \geq 1$ anyway, and so this step becomes unnecessary.}

The number of circuits is now $LT$. As $L < \frac{\mu}{\rho T} + 1$ by the ceiling function, $LT < \frac{\mu}{\rho} + T$, hence $LT = \mu(\log \mu)^{\mathcal{O}(\log \log \mu)}$. As $d = \mathcal{O}(1)$ each circuit has constant size, so each matrix $G_\omega$ has constant size. Therefore the number of rows and columns of $\hat{D}$ is $\mu(\log \mu)^{\mathcal{O}(\log \log \mu)}$.

\begin{remark}
    We would like to stop here: use $\hat{D}$ as a matrix in our target complex $\chi$ and show that this achieves relative expansion. However, there are two problems: first, our matrix is over $K$, not $\F_2$. Second, it is row-sparse but not column-sparse.
\end{remark}

\subsection{Restriction to $\F_2$}

We need to convert $\hat{D}$ into a matrix over $\F_2$. Recall that the basis of $K$ that we chose was $\{1, \alpha_2, \cdots, \alpha_k\}$. We would like to test words in $\F_2^\mu$, not $K^\mu$, and therefore for each copy of $K$ only test the field element $(1, 0, 0, \cdots, 0) \in K$.

We now restrict scalars from $K$ to $\F_2$. 
Let \[ z=z^{(1)}+\alpha_2z^{(2)}+\cdots+\alpha_kz^{(k)}, \qquad z^{(j)}\in\F_2^\mu.\]
By Lemma~\ref{lem:kernel_eq}, if \[ c=c^{(1)}+\alpha_2c^{(2)}+ \cdots+\alpha_kc^{(k)} \in \SAT(\mathcal{A}), \] then $c^{(j)}\in \ker_{\F_2}(\delta^1_A)$ for every $j$. 

It follows that,
\begin{restatable}{lemma}{binarysoundness}\label{eq:binary_soundness}
 \[|\operatorname{Res}_{\F_2}(\hat{D})(a,\pi, g)| \geq \dist_{\F_2}(a^{(1)},\ker_{\F_2} \delta^1_A),\]
 for all $a,\pi,g$.
\end{restatable}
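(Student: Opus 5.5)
The plan is to chain three weight comparisons: from the $\F_2$-weight of the restricted output to the $K$-weight of the unrestricted output, from there to $\dist_K$ via the (renormalised) soundness of $\hat{D}$, and finally from $\dist_K$ down to $\dist_{\F_2}$ of the first coordinate using Lemma~\ref{lem:kernel_eq}.

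\textbf{Setup.} Restriction of scalars identifies each $z\in K^N$ with the tuple $(z^{(1)},\dots,z^{(k)})\in(\F_2^N)^k$ via the basis $\{1,\alpha_2,\dots,\alpha_k\}$. Then $\operatorname{Res}_{\F_2}(\hat{D})$ is the $\F_2$-linear map satisfying
\[\operatorname{Res}_{\F_2}(\hat{D})(a,\pi,g)=\hat{D}(a,\pi,g)\]
under this identification. Its inputs are exactly the $K$-vectors $(a,\pi,g)$, now viewed as binary vectors, and its output is $\hat{D}(a,\pi,g)$ written in coordinates.

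\textbf{Step 1: $\F_2$-weight dominates $K$-weight.} Each nonzero entry of $\hat{D}(a,\pi,g)\in K^\vartheta$ has at least one nonzero coordinate in the basis. Hence
\[|\operatorname{Res}_{\F_2}(\hat{D})(a,\pi,g)|_{\F_2}\ \geq\ |\hat{D}(a,\pi,g)|_K.\]

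\textbf{Step 2: soundness of $\hat{D}$.} After repeating each circuit $L$ times, Lemma~\ref{lem:matrix_realisation} gives
\[|\hat{D}(a,\pi,g)|_K\geq \dist_K(a,\SAT(\mathcal{A})).\]
I would verify this repetition step explicitly. Each rejecting $\omega$ now contributes at least $L$ nonzero syndromes, one from each copy, regardless of how the fresh internal variables of the copies are chosen. So the count is at least $L\cdot\frac{\rho T}{\mu}\dist_K\geq \dist_K$.

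\textbf{Step 3: from $\dist_K$ to $\dist_{\F_2}$ of the first coordinate.} Take $c\in\SAT(\mathcal{A})=\ker_K\delta^1_A$ attaining $\dist_K(a,\SAT(\mathcal{A}))$. By Lemma~\ref{lem:kernel_eq}, $c=\sum_j\alpha_jc^{(j)}$ with every $c^{(j)}\in\ker_{\F_2}\delta^1_A$; in particular $c^{(1)}\in\ker_{\F_2}\delta^1_A$. For any coordinate $i$ with $a_i=c_i$ in $K$, the first basis coordinates also agree, i.e.\ $a^{(1)}_i=c^{(1)}_i$. Therefore
\[\supp(a^{(1)}-c^{(1)})\subseteq\supp_K(a-c),\]
which gives
\[\dist_K(a,\SAT(\mathcal{A}))=|a-c|_K\geq|a^{(1)}-c^{(1)}|\geq\dist_{\F_2}(a^{(1)},\ker_{\F_2}\delta^1_A).\]

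Chaining Steps 1--3 gives the claim. The only potentially delicate point is the direction of Lemma~\ref{lem:kernel_eq} used in Step 3. We need that $K$-kernel elements decompose coordinatewise into $\F_2$-kernel elements, not merely the easy inclusion $\ker_{\F_2}\delta^1_A\subseteq\ker_K\delta^1_A$. This holds because $\delta^1_A$ has entries in $\F_2$, so applying it to $c$ acts separately on each basis coordinate: $\delta^1_A c=\sum_j\alpha_j\,\delta^1_A c^{(j)}$, and linear independence of the $\alpha_j$ over $\F_2$ forces $\delta^1_A c^{(j)}=0$ for every $j$.
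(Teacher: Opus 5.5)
Your proposal is correct and follows the paper's proof essentially step for step. Like the paper, you chain $|\cdot|_{\F_2}\geq|\cdot|_K$, then the renormalised soundness of $\hat{D}$, then use $c^{(1)}\in\ker_{\F_2}\delta^1_A$ from Lemma~\ref{lem:kernel_eq} together with $\supp(a^{(1)}-c^{(1)})\subseteq\supp(a-c)$. Your explicit checks of the $L$-fold repetition, and of the needed direction of Lemma~\ref{lem:kernel_eq} via $\F_2$-independence of $\{1,\alpha_2,\dots,\alpha_k\}$, are details the paper leaves to the preceding text and the citation.
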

which we prove in Appendix~\ref{app:proofs}. Note that the right hand side contains only $a^{(1)}$, not $a$: we are only testing against the field element $(1,0,0,\cdots,0)$ for each copy of $K$. For simplicity, from now on we will use $a$ to refer to an element of $\F_2^\mu \subseteq K^\mu$, rather than $K^\mu$; the now-redundant vector space $\F_2^{\mu(k-1)}$ is absorbed into the other vector spaces containing $\pi$ and $g$.

However, restriction of scalars doesn't generally preserve row or column weights. Each linear equation in $\hat{D}$ over $K$ becomes $k$ binary linear equations, and multiplication by an element in $K$ becomes a $k\times k$ binary matrix. Each equation can have weight as large as $\mathcal{O}(qk)$, where $q$ is the fan-in over $K$; as the fan-in is bounded, each equation has weight at most $\mathcal{O}(k)$.
Therefore $\operatorname{Res}_{\F_2}(\hat{D})$ has row weights $\mathcal{O}(k) = \mathcal{O}(\log \mu)$.

We therefore restore bounded row weight by replacing each binary equation by a set of weight-3 XORs, introducing fresh variables. The original equation is satisfied iff the new replacement set of equations is satisfied.

Let this new matrix be $D_0$. Since each equation over $K$ corresponds to $k$ binary equations, and each binary equation would have weight $\mathcal{O}(k)$, the total size of $D_0$, over $\F_2$, is increased by a factor of $\mathcal{O}(k^2) = \mathcal{O}(\log^2 \mu)$ from the size of $\hat{D}$, over $K$. Because we are only adding gates, and all gate equations can still be satisfied by the same inputs over $K$, we still have
\[|D_0(a,\pi, g)| \geq \dist_{\F_2}(a,\ker \delta^1_A)\]
and, for any $a \in \ker \delta^1_A$, $\exists \pi, g$ such that
\[D_0(a, \pi, g) = 0.\]
Since the size of $\hat{D}$ was $\mu(\log \mu)^{\mathcal{O}(\log \log \mu)}$, an additional $\mathcal{O}(\log^2 \mu)$ overhead is absorbed without changing the asymptotics.

From now on, we shall bundle together $\pi$ and $g$ for simplicity, such that $D_0 : \F_2^{\mu} \oplus P_0 \rightarrow Q_0$.

\subsection{Reducing column weight}

We have now acquired a matrix $D_0$ with useful expansion properties, but which has high column weight. It has high column weight because each linear circuit from the linear PCPP can have overlapping input bits and proof bits; the internal structure of the circuits (internal wires, internal gates) is all sparse. We therefore need a way to separate out the input bits without losing the expansion properties we have carefully acquired. To do this, we prove the following lemma.


\begin{lemma}[Expander sparsification lemma]
\label{lem:degree_reduction}
Let

\[D_0 : \F_2^{\mu} \oplus P_0 \rightarrow Q_0\]
have row weight bounded above by a constant. Let
\[|D_0(a, p)| \geq \beta_0 \dist(a, \mathcal{S})\ \forall a \in \F_2^\mu, p\in P_0,\]
for some constant $\beta_0$ and $\mathcal{S} \subseteq \F_2^\mu$. Let
\[D_0(a, p) = 0\]
for any $a \in \mathcal{S}$ and some choice of $p$.

Then there exists,
\[
    D:\F_2^{\mu}\oplus P \rightarrow Q
\]
such that:
\begin{enumerate}
    \item $D$ has bounded row and column weight;
    \item $|D(a, p)| \geq \beta\,\dist(a,\mathcal{S})$
    for all $a, p$, where $\beta>0$ is a constant dependent on $\beta_0$,
    \item For any $a \in \mathcal{S}$, $\exists p \in P$ such that
    \[D(a,p)=0.\]
    \item The size of $D$ is $\mathcal{O}\bigl(\nnz(D_0)+\mu + \dim(P_0)\bigr)$.
\end{enumerate}
\end{lemma}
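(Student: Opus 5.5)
The plan is to replace each high-degree variable of $D_0$ by a cloud of copies, one per incidence, and tie the copies together with a constant-degree expander graph. For each column $j$ of $D_0$, whether it indexes a coordinate of $a$ or of $p$, let $\deg(j)$ be its column weight. We introduce $\deg(j)$ fresh variables $x_{j,1},\dots,x_{j,\deg(j)}$ and rewrite each row of $D_0$ that used variable $j$ to use its own private copy. Since $D_0$ has constant row weight, every rewritten row still has constant weight, and every copy now appears in exactly one original row.

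On each cloud we place a constant-degree edge expander $\mathcal{G}_j$ on $\deg(j)$ vertices, with edge expansion $h>0$ in the sense of Def.~\ref{def:edge_expansion}. For each edge $(u,v)$ we add the equality check $x_{j,u}+x_{j,v}=0$; clouds of size $1$ or $2$ are handled trivially. For input columns $j\in[\mu]$ we designate one copy $x_{j,1}$ as the coordinate $a_j$, and all other copies, together with the copies of proof columns, form $P$. Then $D$ consists of the rewritten rows of $D_0$ stacked with all the expander equality rows. Row weight is bounded by the row weight of $D_0$ and by $2$. Column weight is bounded by $1+\deg(\mathcal{G}_j)$. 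The size is $\sum_j \deg(j)=\nnz(D_0)$ variables, plus $\mathcal{O}(\nnz(D_0))$ expander checks and the original $\dim Q_0$ rows; columns of degree zero contribute at most $\mu+\dim P_0$. This gives items 1 and 4. Item 3 is immediate: for $a\in\mathcal{S}$ with witness $p$, set every copy of a variable to that variable's value, so all equality checks and all rewritten rows vanish.

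The main step is soundness, item 2. Take any assignment $(a,p')$ to $D$. In each cloud, let $y_j$ be the majority value, and let $e_j$ be the number of copies disagreeing with $y_j$. Edge expansion gives at least $h\,e_j$ violated equality checks in cloud $j$. Define $(\tilde a,\tilde p)$ by the majority values $y_j$. Changing the disagreeing copies to the majority makes the rewritten $D_0$ rows equal to $D_0(\tilde a,\tilde p)$. Each copy lies in exactly one row, so this changes the number of violated rewritten rows by at most $\sum_j e_j$. Hence
\[|D(a,p')|\ \ge\ h\textstyle\sum_j e_j + \max\bigl(0,\ |D_0(\tilde a,\tilde p)|-\sum_j e_j\bigr).\]
Choosing $h\le 1$, the right-hand side is at least $\tfrac{h}{2}\bigl(|D_0(\tilde a,\tilde p)|\bigr)+\tfrac{h}{2}\sum_j e_j$ (take the two cases of whether $\sum e_j$ exceeds half). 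Now $|D_0(\tilde a,\tilde p)|\ge\beta_0\dist(\tilde a,\mathcal{S})$, and the designated coordinate $a_j$ differs from $\tilde a_j$ only if cloud $j$ has $e_j\ge1$. So $\dist(a,\mathcal{S})\le\dist(\tilde a,\mathcal{S})+\sum_j e_j$, giving $\beta=\tfrac{h}{2}\min(\beta_0,1)$.

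The delicate point is precisely this last bookkeeping: the designated copy may be a minority copy, and ties in majority occur for even clouds. Both are absorbed by charging to $e_j$, with ties broken arbitrarily, so $e_j\le\deg(j)/2$ still holds and expansion applies with $\min\{|v|,|\mathbf{1}-v|\}=e_j$. We also need explicit constant-degree expanders on every size, constructible in polynomial time. Families such as Margulis-Gabber-Galil or zig-zag graphs, padded to arbitrary sizes at constant cost, suffice.
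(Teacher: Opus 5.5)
Your proposal is correct and is essentially the paper's argument: one copy per incidence, equality checks along a constant-degree expander on each cloud, majority decoding, and charging the minority copies $M=\sum_j e_j$ both against violated edge checks and against $\dist(a,\tilde a)$. The differences are cosmetic. The paper adds an extra ``original'' copy that lies in no row of $D_0$, and duplicates expander edges to get $h\ge 2$, so that $|D(a,p)|\ge |D_0(\bar a,\bar p)|+M$ directly and $\beta=\min(1,\beta_0)$, which is what the later application with $\beta_0=1$ uses. Your $h\le 1$ version loses only a constant factor $h/2$, which is harmless for the lemma as stated.
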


\proof
Say that a variable $v$ appears in $m$ rows in $D_0$. For each row in $D_0$ that a variable $v$ appears in, introduce a new copy of $v$, along with the original copy; thus we have a set
\[V= \{v, v_1,\cdots, v_m\}\]
of variables.
Next, introduce a constant-degree expander graph with vertices $V$ and expansion $h \geq 2$. \footnote{This can be acquired by taking a constant-degree simple expander graph with constant expansion $h'$~\cite{reingold2002entropy,hoory2006expander} and then copying each edge $b$ times to make a multigraph with expansion $h = h'b$, following Def.~\ref{def:edge_expansion}. One can also add extra vertices without penalty, so long as the size is $O(|V|)$ and expansion $h\geq 2$.} Impose the equation $u = u'$ on edges $(u, u')$ of the graph. Then, for each time $v$ appears in $D_0$, replace it by a copy $v^i$; we are left with one extra $v$, which only takes part in the equations for its edges in the graph.

Do this for each variable, until the matrix has constant column weight (the row weight stays constant), satisfying (1.). We therefore have a new matrix $D:\F_2^{\mu}\oplus P \rightarrow Q$. Each original copy of $v$ with support in $\F_2^\mu \subseteq \F_2^\mu \oplus P_0$ is kept in $\F_2^\mu \subseteq \F_2^\mu \oplus P$; everything else is moved into $P$.

Now, consider an arbitrary input $(a, p)$ to $D$. For each expander graph, let $\bar{v}$ be the majority value of the input in the set $V$: that is, there is an assignment of values to each vertex, and let $\bar{v}$ be the majority value of those. \footnote{If there is a tie, choose the value of the original copy of $v$.}
Let $M$ be the total number of vertices whose values disagree with the majority value in their respective expander graphs. Every expander edge connecting vertices which disagree yields a 1 syndrome, and so the number of total violated edge checks is
\[N_{\mathrm{edge}} \geq 2M\]
by the edge expansion of each graph.

Then let $(\bar{a}, \bar{p})$ be the resulting assignment that would be input to $D_0$ by taking the majority value $\bar{v}$ for each expander graph. Flipping all minority variables to the majority value can flip at most $M$ checks in $D$, and so,
\[N_{\textrm{non-edge}} \geq |D_0(\bar{a}, \bar{p})| - M\]
where $N_{\textrm{non-edge}}$ is the number of violated checks in $D$ which aren't edge checks. Therefore,
\[|D(a, p)| \geq N_{\mathrm{edge}} + N_{\textrm{non-edge}} \geq |D_0(\bar{a}, \bar{p})| - M + 2M = |D_0(\bar{a}, \bar{p})| + M.\]
Then, $\dist(a, \bar{a}) \leq M$, because whenever an original vertex (before copying) disagrees with the majority of $V$, it is counted as one of the minority vertices in $M$.

Therefore, $\dist(a, \mathcal{S}) \leq M + \dist(\bar{a}, \mathcal{S})$, by the triangle inequality. If $\beta_0 \leq 1$, then
\begin{align*}
    &|D(a,p)| \geq \beta_0(\dist(a, \mathcal{S}) - M) + M \\
    &\geq \beta_0\dist(a, \mathcal{S}) + (1-\beta_0)M\\
    &\geq \beta_0\dist(a, \mathcal{S}).
\end{align*}
If $\beta_0 > 1$, then
\begin{align*}
    &|D(a,p)| \geq \beta_0\dist(\bar{a}, \mathcal{S}) + M  \\
    & \geq \dist(\bar{a}, \mathcal{S}) + M  \\
    &\geq \dist(a, \mathcal{S}).
\end{align*}

So $|D(a,p)| \geq \min(1, \beta_0) \dist(a,\mathcal{S})$, fulfilling (2.).

For (3.), because there exists $\bar{p} \in P_0$ which satisfies $D_0(a, \bar{p}) =0$ for any $a \in \mathcal{S}$, the same is true for $D$ and a $p \in P$. Each time an assignment of $a$ or $\bar{p} \in P_0$ encounters an expander graph, simply set all vertices to have that same value.

(4.) is satisfied because each expander graph has size on the order of the number of rows in which a variable appears, the expanders are constant-degree, and there are at most as many expanders as there are variables, hence the size of $D$ is $\mathcal{O}\bigl(\nnz(D_0)+\mu + \dim(P_0)\bigr)$.
\endproof

In our case, $\mathcal{S} = \ker \delta^1_A$, and $\nnz(D_0) = \mu(\log \mu)^{\mathcal{O}(\log \log \mu)}$; the number of rows in $D_0$ is $\mu(\log \mu)^{\mathcal{O}(\log \log \mu)}$, and each row has constant weight. Similarly, $\mu + \dim(P_0) = \mu(\log \mu)^{\mathcal{O}(\log \log \mu)}$, and so the size of $D$, our sparsified matrix, is $\mathcal{O}\bigl(\nnz(D_0)+\mu + \dim(P_0)\bigr) = \mu(\log \mu)^{\mathcal{O}(\log \log \mu)}$. Also, $\beta_0 = 1$ in our case, and so
\[|D(a,p)| \geq \dist(a,\ker\delta^1_A)\]
for any $a \in \F_2^\mu, p \in P$.

\begin{remark}
    This step can be viewed as weight reduction in the style of Hastings~\cite{hastings2016weight,hastings2023quantum,sabo2024weight}, but crucially the property we are aiming to preserve is the relative expansion of the code, rather than its distance. It is also similar to the `expander-replacement' transformation from Ref.~\cite[Lem.~3.2]{dinur2007pcp}, taken from Ref.~\cite{papadimitriou1991optimization}.
\end{remark}

\subsection{The mapping cone}\label{sec:the_cone}

We can now prove Lemma~\ref{lem:sparse_relative_expander}, restated below for convenience. 

\sparseexpander*
\proof
Insert the $D: \F_2^\mu \oplus P \rightarrow Q$ constructed in the previous sections into the following cochain complex:
\[\chi = \begin{tikzcd}
    P \arrow[r, "D^{(2)}"]& Q \\
    \F_2^\mu \arrow[ur, "D^{(1)}"] \arrow[r, "\delta^1_A"'] & \F_2^\nu
\end{tikzcd}
\]
where direct sums are taken vertically, and $D^{(1)} = D_{\restriction \F_2^\mu}$, $D^{(2)} = D_{\restriction P}$. That is,
\[\delta^1_\chi(a,p) = (\delta^1_A a, D(a,p)).\]

\begin{enumerate}
    \item Let $g: \chi \rightarrow A$ be the projection of $\chi$ onto $A$. (1.) is satisfied.
    \item As $D$ has size $\mu(\log \mu)^{\mathcal{O}(\log \log \mu)}$, and $\dim \F_2^\nu \in \mathcal{O}(\mu)$, (2.) is satisfied.
    \item Both $D$ and $\delta^1_A$ are sparse, so (3.) is satisfied.
    \item By property (3.) of Lemma~\ref{lem:degree_reduction}, setting $\mathcal{S} = \ker \delta^1_A$, $g^1(\ker \delta^1_\chi) = \ker\delta^1_A$, so (4.) is satisfied.
    \item As $|D(a,p)| \geq \dist(a,\ker\delta^1_A)$, (5.) is satisfied.
\end{enumerate}
\endproof

The overall mapping cone on an initial memory code and choice of subcode $A$ with injection $f: A \rightarrow C$ and size $\mu$ is then $\cone(fg)$. As $|\delta^1_\chi x| \geq \min_{z \in \ker\delta^1_\chi} |g^1(x-z)|$, $\forall x \in \chi^1$ and $f$ has column weight at most 1, 
\[|\delta^1_\chi x| \geq \min\{d(C), \min_{z \in \ker\delta^1_\chi} |f^1g^1(x-z)|\}\ \ \forall x \in \chi^1.\]

By Lemma~\ref{lem:no_gaugefix}, the logical measurement then has fault distance $d(C)$ when run for $d(C)$ rounds.

The size of our ancilla system $\chi$ is $\mu(\log \mu)^{\mathcal{O}(\log \log \mu)} = \mu e^{\mathcal{O}((\log\log \mu)^2)} = \mu e^{o(\log \mu)} = \mu \mu^{o(1)} = \mu^{1+o(1)}$.

That all the operators in $A$ are measured can be checked using the Snake lemma~\cite{ide2025fault,weibel1994introduction}.

As one can always choose $A$ according to a union-addressable basis, see Def.~\ref{def:uab}, any set of single qubit all-$\bar{X}$ logical operators (in a union-addressable basis) on any LDPC CSS code can be measured in space $\mu^{1+o(1)} \leq n^{1+o(1)}$ and time $\mathcal{O}(d)$. By inverting the construction to use chain complexes, rather than cochain complexes, it works equally for all-$\bar{Z}$ logical operators. We claim that it also works immediately for heterogeneous mixtures of single-qubit $\{\bar{X},\bar{Y},\bar{Z}\}$ logicals, by Ref.~\cite{yuan2026noncss}, but do not prove it here, as it requires exposition on symplectic algebra.

Depending on whether there exist alternative union-addressable bases for other (sub)sets of such operators, our construction can also be used to measure Pauli products in the same codeblock.

\begin{remark}
    Our method can be seen as a substantial generalisation of thickening. The construction of $\chi$ is very similar to that of thickening, in that a new vector space is used to ensure sufficient support is generated from every $\delta^1_\chi x$, but rather than each layer in the tensor product $R \otimes A$ being a copy of $A$, there is only one layer beyond the initial copy of $A$. That layer is not the same as $A$, and it is sufficient on its own.
\end{remark}

\subsection{Construction runtime}

We now check that the construction of $\chi$ from any $A$ can be done in time polynomial in $\mu$.

\begin{enumerate}
    \item Constructing $\mathcal{A}$ from $A$ can be done in time $\mathcal{O}(\nnz(A)) = \mathcal{\tilde{O}}(\mu)$, where we allow polylog factors for arithmetic in $K = \F_{2^k}$, as $k = \log \mu$.
    \item Constructing a linear circuit from the PCPP theorem takes polynomial time, by definition (see Remark~\ref{rem:pcpp_runtime}). Enumerating the strings $\omega$ takes time $T$, which is in $\mu^{1+o(1)}$, and so getting all of the linear circuits $\{\psi_\omega\}$ takes polynomial time.
    \item Converting $\{\psi_\omega\}$ into $\hat{D}$ can be done in polynomial time by converting each gate in the circuits into checks one-by-one.
    \item Restriction of scalars takes polynomial time, and subsequently sparsifying the rows of $\hat{D}$ over $\F_2$ can be done in time linear in the number of non-zeros being replaced, which is evidently polynomial in $\mu$.
    \item Constant-degree expanders are polynomial-time constructible~\cite{reingold2002entropy,hoory2006expander}, and the total number of non-zeros in $D_0$ is $\mu^{1+o(1)}$, so the runtime for column weight reduction is polynomial in $\mu$.
\end{enumerate}

\section{Discussion and outlook}

In this work, we derived an extremely asymptotically efficient method of performing high-rate surgery. This worked by taking existing constructions of linear PCPPs and converting them into sparse complexes with relative expansion, opening up a new research program connecting linear PCPPs and surgery gadgets.

There are several further directions which we would like to explore. First, to make the construction practical we must determine and optimise the constants involved; in search of the best possible asymptotic results we ignored constants, and performed manipulations (such as field extension and reduction) which are cheap asymptotically but likely come with burdensome overheads for finite-length codes. It seems likely that, rather than taking linear PCPPs verbatim and using them to construct matrices, it would be preferable to follow the existence proofs of linear PCPPs and take those as inspiration for a more direct construction of relative expanders, avoiding linear circuits and field extensions entirely.

It may also be the case that there are linear PCPPs which are preferable for practical blocklengths, despite having worse asymptotic overheads. Ref.~\cite{meir2012combinatorial} gives constructions of several linear PCPPs other than that of Thm.~\ref{thm:pcpp}. For example, Ref.~\cite[Thm.~7.7]{meir2012combinatorial} has worse asymptotic overheads (e.g. decision complexity $\tilde{\mathcal{O}}(\mu^{2/3})$ as opposed to $\mathcal{O}(1)$) but is significantly less complicated, and likely has superior constants.

A limitation of our method for practical decoding is that the detectors are dense, and extend for $d$ rounds in time; this does not prohibit the fault distance from being $d$, but it means that a decoder cannot finish decoding the first round of the logical measurement until the last is completed, and decoding a dense decoding graph may be challenging for the state-of-the-art real-time decoders~\cite{muller2025improved,bhardwaj2026high,ye2026real}. We believe that this can be solved by using the structure of the linear PCPPs, but the proof requires carefully tracing through the proofs in~\cite{meir2012combinatorial,meir2009combinatorial}, and will greatly complicate the procedure. For brevity, we have used Thm.~\ref{thm:pcpp} as a black box in this work, and aimed to avoid substantial exposition of the specific linear PCPP construction. Aside from this, reasonable practical heuristics to generate sparse bases exist~\cite{zheng2025high}, unlike generating good relative cosystolic expanders; and verifying whether a basis is sparse is easily achieved in polynomial time, whereas verifying that a cochain complex is an expander -- or, similarly, verifying the code distance or cleaning properties directly -- is extremely difficult. So we claim that finding sparse checks to make $\chi$ exact at degree 2 is less of a concern than the design of relative expanders.

Assuming that one can find instantiations of our method with practical constants, architectural proposals which use high-rate surgery~\cite{bhardwaj2026high} can merely swap out existing constructions for new ones derived from linear PCPPs. However, it would be interesting to co-design entirely new architectures based on LDPC codes, automorphism gates and high-rate surgery, generalising extractor-style architectures~\cite{he2025extractors,yoder2025tour,webster2026pinnacle}; previously, gadgets which measured many logicals on the same codeblock simultaneously were used sparingly, as their construction was expensive to compute, and not guaranteed to yield efficient gadgets. Given that our construction runs in polynomial time, future high-rate surgery gadgets can be found efficiently, and the most practical of these chosen for use in the architecture. To build an architecture without needing extractor-style gadgets, one should first generalise to measuring non-CSS subcodes, which can include a set of $\{X,Y,Z\}$ operators; by the results of Ref.~\cite{yuan2026noncss}, this is immediate. Then one must further generalise our construction to measure a wider class of Pauli products, as opposed to just subcodes, and potentially generalise `bridges'~\cite{cross2025improved} or `adapters'~\cite{swaroop2026universal}. This seems like a difficult task, and we are not optimistic that this can be achieved in general with low asymptotic overhead, but co-design may enable discovery of adapters for high-rate surgery with practical codes. Interestingly, we may arrive at a situation similar to low-rate codes, such as surface codes~\cite{litinski2019game}, where addressable logical single-qubit measurements can be accomplished cheaply (using subcodes from union-addressable bases, in the high-rate case), but addressable products are more expensive; this is a very different regime from extractors~\cite{he2025extractors}, where a logical product measurement on a codeblock has the same cost as a logical single-qubit measurement.

The last generalisation which would be important is to fast surgery~\cite{cowtan2025fast,chang2026constant,baspin2025fast}, which performs a set of logical measurements in $O(1)$ rounds. Fast surgery measures a distance $d$ subcode, and puts conditions on the auxiliary complex $\chi$, which must itself have systolic distance $d$ at degree 1 (in our conventions). We do not believe the construction of $\chi$ presented literally in this work satisfies these conditions, but we suspect that a synthesis of the techniques here and in~\cite[Sec.~IV B]{baspin2025fast}, replacing the expansion boosting by thickening with the expansion boosting from the linear PCPP, would suffice.

While it would not have immediate implications for efficient surgery, it is also interesting to consider whether one can go the other way: take constructions of relative expanders and use them to derive linear PCPPs. We leave this for future work.

\section{Acknowledgements}
A.C. thanks Zhiyang (Sunny) He, Dominic J. Williamson and Theodore J. Yoder for many helpful discussions over the past 2 years.

Portions of this work were carried out within an Agentic ResearCH environment (ARCH) deployed at Xanadu. The ARCH couples a frontier language model (via the Claude Agent SDK) to internal code repositories, experiment and metrology databases, and literature search APIs, together with an isolated Linux environment in which it installs dependencies, executes code, and reproduces numerical results. Given a task specification, the ARCH plans and executes multi-step workflows autonomously, without step-by-step human direction. It was used here to refine and improve a proof strategy proposed by the authors connecting relative cosystolic expansion to PCPPs. All proofs were either developed or checked by the authors, who take responsibility for any errors, and the paper was written entirely by the authors.

\printbibliography[heading=bibintoc]

\appendix

\section{Construction of the linear PCPP in Thm.~\ref{thm:pcpp}}\label{app:pcpp}

Here, we give a terse explanation of Thm.~\ref{thm:pcpp}, omitting many important subtleties concerning `block access' or `row/column access'. The theorem starts originally from Ref.~\cite[Sec.~7]{meir2009combinatorial}, which, when given an input circuit $\phi$ generates a series of `tester circuits' $\{\psi\}$ that try to verify whether an input is in $\SAT(\phi)$. These do not generally have any soundness properties, the circuits may be large, and there may be many of them.
However, they have the property that for an input choice of variables $x \not\in \SAT(\phi)$, at least one tester circuit will output a 1.

These circuits are then `robustised'. This works by encoding the input space to each $\psi$ into an error-correcting code, and modifying the circuits accordingly. The purpose of this step is to make it so that, rather than just a tester circuit being dissatisfied, if the variable $x \not\in \SAT(\phi)$ is \textit{far} from $\SAT(\phi)$, the tester circuit will also be far from being satisfied. That is, there exists a linear PCPP $V'$ which has a slightly different soundness property, namely that for every input and proof string there exists \textit{some} $\omega$ for which
\[\dist\big((x\circ \pi)_{\restriction I_\omega}, \SAT(\psi_\omega)\big) \geq \rho \dist(x,\SAT(\phi)) \footnote{These distances are normalised, unlike those in Def.~\ref{def:field_distances} and throughout the main body.}\]
where $\psi_\omega$ is the linear circuit produced by $V'$, given $\omega$.

This is then used as Ref.~\cite[Lem.~5.6]{meir2012combinatorial}, which is combined with~\cite[Thm.~5.5]{meir2012combinatorial}. Very briefly, the idea is that the robustised circuits of Lem.~5.6, along with the proof string, can be combined with a \textit{simultaneous} PCPP (SPCPP), which tests many circuits in parallel and if any of the circuits are sufficiently dissatisfied then the SPCPP verifier rejects with some non-negligible probability. The SPCPP uses Reed-Solomon codes~\cite{reed1960polynomial}, which have a minimal field size, and so the resultant theorem (\cite[Thm.~5.1]{meir2012combinatorial}) has a minimal field size:

\begin{theorem}\cite[Thm.~5.1]{meir2012combinatorial}
    There exists a linear PCPP with randomness complexity $\frac{1}{2}\log n + \mathcal{O}(\log \log n)$, decision complexity $\tilde{\mathcal{O}}(\sqrt{n})$, minimal field size $\tilde{\mathcal{O}}(\sqrt{n})$ and rejection ratio $\Omega(1)$.
\end{theorem}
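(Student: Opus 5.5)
The plan is to follow the two-stage outline sketched just above: first produce a family of roughly $\sqrt{n}$ \emph{robust} tester circuits, each of size $\tilde{\mathcal{O}}(\sqrt{n})$, and then verify all of them at once with a \emph{simultaneous} PCPP built from Reed--Solomon codes. In the first stage, following Ref.~\cite[Sec.~7]{meir2009combinatorial}, I would arrange the wire values of $\varphi$ (input $x$, together with the internal wires as a proof $\pi$) into a $\sqrt{n}\times\sqrt{n}$ array. I would then use routing via permutation networks to split the check ``all gates of $\varphi$ are satisfied'' into $s=\mathcal{O}(\sqrt{n})$ tester circuits $\psi_1,\dots,\psi_s$. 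Each $\psi_i$ reads a single row or column, and the family has the property that $x\notin\SAT(\varphi)$ forces some $\psi_i$ to reject.

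Next I would robustise these testers, as in \cite[Lem.~5.6]{meir2012combinatorial}. Each row or column is encoded with a constant-rate, constant-relative-distance linear code, and each $\psi_i$ is augmented with the corresponding encoding checks. This gives the following stronger property: for every $(x,\pi)$ there is some $i$ with $\dist\big((x\circ\pi)_{\restriction I_i},\SAT(\psi_i)\big)\geq \Omega(1)\cdot\dist(x,\SAT(\varphi))$, with distances normalised. The point of this step is to convert ``some tester is dissatisfied'' into ``some tester is far from satisfied'', which is what the second stage needs.

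In the second stage, following \cite[Thm.~5.5]{meir2012combinatorial}, I would first make the testers uniform: up to padding, all $\psi_i$ share one wiring pattern and differ only in their field coefficients. Indexing $i$ by $s$ distinct elements of $K$, I would interpolate the coefficients and the queried assignments as polynomials of degree $<s$ in the index variable $z$. The proof then contains Reed--Solomon encodings of the columns. The verifier draws a uniformly random $z\in K$, which costs $\log|K|=\tfrac12\log n+\mathcal{O}(\log\log n)$ random bits because $|K|=\tilde{\mathcal{O}}(\sqrt{n})$. It then reads the $\tilde{\mathcal{O}}(\sqrt{n})$ symbols of the interpolated assignment at $z$ and runs the interpolated circuit $\psi_z$; this is linear in the read symbols, so the decision complexity is $\tilde{\mathcal{O}}(\sqrt{n})$. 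Alongside this, it runs linear consistency and low-degree checks, namely a RS proximity test and a comparison against the original rows at a random position.

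For completeness, honest encodings satisfy every check identically in $z$. For soundness, there are two cases. If the proof is far from a valid RS encoding, the low-degree and consistency checks reject with constant probability. Otherwise, the far-from-satisfied tester $\psi_i$ given by robustness contributes a nonzero polynomial of degree $<s$ to the violation pattern. By the RS distance bound, that pattern survives at all but $s/|K|$ of the points $z$, and this fraction is below $1/2$ once $|K|\geq 2s$. This is exactly where the minimal field size $\tilde{\mathcal{O}}(\sqrt{n})$ comes from, and it yields rejection ratio $\Omega(1)$. Everything is linear because each step only forms linear combinations of field elements.

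I expect the main obstacle to be this soundness transfer. It requires showing that ``some $\psi_i$ far from satisfied'' survives evaluation at a random $z$ quantitatively, in terms of distance rather than mere nonzeroness, without losing more than a constant factor. It also requires that the uniformisation of the testers does not destroy the robustness constant.
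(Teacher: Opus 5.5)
Your overall route is the one the paper sketches: tester circuits from \cite[Sec.~7]{meir2009combinatorial}, robustised as in \cite[Lem.~5.6]{meir2012combinatorial}, then checked all at once by a Reed--Solomon simultaneous PCPP as in \cite[Thm.~5.5]{meir2012combinatorial}. The second stage as you describe it, however, breaks completeness. You let the $\psi_i$ differ in their field coefficients, and you interpolate both the coefficients $c(z)$ and the assignment $w(z)$ over the index set $H=\{h_1,\dots,h_s\}$. Interpolation does not commute with multiplication, so the honest interpolated assignment does not satisfy $\psi_z$ for $z\notin H$. A gate constraint $\sum_k c_k(z)w_k(z)$ is a polynomial of degree up to $2s-2$ that vanishes on $H$ but is generically nonzero off it. For example, take two one-gate testers $c_ia_i=b_i$ with $(c_1,a_1,b_1)=(1,1,1)$ and $(c_2,a_2,b_2)=(\alpha,\alpha,\alpha^2)$ on $H=\{0,1\}$. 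Interpolating gives $c(z)a(z)-b(z)=(\alpha-1)^2z(z-1)$. An honest proof is therefore rejected at almost every random $z$. Your claim that honest encodings satisfy every check identically in $z$ holds only if all $\psi_i$ are literally the same circuit, coefficients included. There is a second problem if you interpolate only the queried inputs and then run $\psi_z$ as a circuit. The $z$-degree then compounds along the depth of $\psi_z$, which can be $\tilde{\mathcal{O}}(\sqrt n)$. The RS distance argument would then need $|K|=\tilde{\mathcal{O}}(n)$ rather than $\tilde{\mathcal{O}}(\sqrt n)$.

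The missing idea is to treat the products explicitly and to test vanishing on $H$, not identical vanishing.
\begin{itemize}
\item Put every wire of every $\psi_i$, internal ones included, into the RS-encoded table. Each gate constraint is then a single product of a known coefficient polynomial with a table column, and so lies in the RS code of degree $<2s$.
\item Let the proof also contain RS encodings of quotients $Q_j$ of degree $<s$, with $\sum_k c_{j,k}(z)w_k(z)=Z_H(z)Q_j(z)$, where $Z_H(z)=\prod_{h\in H}(z-h)$.
\item At the random $z$, check this identity. The check is still linear in the symbols read, because the $c_{j,k}(z)$ are computed from the explicit input.
\end{itemize}
Completeness now holds, because an honest constraint polynomial vanishes on $H$ and is therefore divisible by $Z_H$. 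For soundness, a violated $\psi_i$ makes $\sum_k c_{j,k}w_k-Z_HQ_j$ a nonzero polynomial of degree $<2s$, whatever $Q_j$ is. So your ``degree $<s$'' becomes ``degree $<2s$'', and you need $|K|\geq 4s$. This is still $\tilde{\mathcal{O}}(\sqrt n)$, so the randomness $\log|K|=\tfrac12\log n+\mathcal{O}(\log\log n)$ and the decision complexity $\tilde{\mathcal{O}}(\sqrt n)$ are unaffected.
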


The decision complexity is notably worse than that of Thm.~\ref{thm:pcpp}; to resolve this, the linear PCPP of the above theorem is then `robustised' further, which encodes the input further into another error-correcting code (Thm.~4.7), this time using tensor codes~\cite{macwilliams1977theory}. Finally, this PCPP verifier is composed with itself $\log \log n$ times, producing Thm.~\ref{thm:pcpp}, which has improved decision complexity, although its other parameters are slightly worse.

\section{Example of a union-addressable basis}\label{app:uab}

Consider a $\llbracket 5,2,1 \rrbracket$ code $C$ with checks,
\[X_1X_2, \qquad X_1X_3X_5, \qquad Z_1Z_2Z_3Z_4.\]
Choose $\bar{X}$ logical representatives,
\[v_1 = X_1X_3, \qquad v_2 = X_2X_4\]
and $\bar{Z}$ logical representatives,
\[u_1 = Z_3Z_5, \qquad u_2 = Z_4,\]
which together satisfy Def.~\ref{def:uab}. 

Let $\mathcal{I} = \{1,2\}$. Then $\mathcal{U}_\mathcal{I} = \supp(X_1X_3) \cup \supp(X_2X_4) = \{1,2,3,4\}$.

Now, we can see that there is a stabiliser $X_1X_2$ contained within the union of supports (but not either of the individual logical operators), and therefore there are two $\bar{X}$ logicals of each of the equivalence classes $[v_1]$ and $[v_2]$ contained as well: $X_1X_3$ and $X_2X_3 \in [v_1]$, $X_2X_4$ and $X_1X_4 \in [v_2]$.

\section{Proofs omitted from the main body}\label{app:proofs}

\binarysoundness*
\proof
First, $|v|_{\F_2} \geq |v|_K$ for any $v$. Now,
$|\operatorname{Res}_{\F_2}(\hat{D})(a,\pi, g)|_{\F_2} \geq |\hat{D}(a,\pi, g)|_K \geq \dist_{K}(a, \SAT(\mathcal{A}))$. We have established that $\SAT(\mathcal{A}) = \ker_K\delta^1_A$. 

It remains to show that $\dist_{K}(a, \ker_K\delta^1_A) \geq \dist_{\F_2}(a^{(1)}, \ker_{\F_2}\delta^1_A)$ for any $a \in K^\mu$. This follows from Lemma~\ref{lem:kernel_eq}:
\[\ker_K \delta^1_A = \ker_{\F_2}\delta^1_A \otimes_{\F_2} K,\]
which means that 
\[\ker_K \delta^1_A = \{c^{(1)} + \alpha_2c^{(2)} + \alpha_3c^{(3)} + \cdots + \alpha_k c^{(k)}: c^{(j)} \in \ker_{\F_2} \delta^1_A\}.\]

So if $c \in \ker_K \delta^1_A$, its first coefficient $c^{(1)} \in \ker_{\F_2} \delta^1_A$. Take an arbitrary choice of $c$, and $a \in K^\mu$. Then $\supp(a^{(1)}-c^{(1)}) \subseteq \supp(a-c)$. Therefore $|a - c| \geq |a^{(1)}-c^{(1)}|$; minimising $c$ over $\ker_{K}\delta^1_A$ gives $\dist_{K}(a, \ker_K\delta^1_A) \geq \dist_{\F_2}(a^{(1)}, \ker_{\F_2}\delta^1_A)$.
\endproof

\faultdistance*
\proof
We are going to essentially describe a `stability experiment'~\cite{gidney2022stability} algebraically. We start with some preamble about fault complexes~\cite{hillmann2025single,cowtan2025fast}. A fault complex is a (co)chain complex with 4 terms:
\[F^0 \rightarrow F^1 \rightarrow F^2 \rightarrow F^3,\]
which describes a spacetime volume in the phenomenological model. $F^0$ and $F^3$ correspond to $X$ and $Z$ detectors respectively, by convention; these detectors are products of measurements (and potentially state initialisations) which must be zero if there are no errors. $X$ detectors are products of $X$ checks (and potentially $\ket{+}$ state initialisations) which detect $Z$ Pauli faults on qubits and $X$ check faults. $Z$ detectors are the same, but dualised.
$F^1$ is a set of locations at which $Z$ faults can occur ($Z$ Pauli faults or $X$ check faults), and $F^2$ is a set of locations at which $X$ faults can occur. Correspondingly, $H^2(F)$ describes equivalence classes of logical $\bar{X}$ faults, and $H_1(F)$ describes equivalence classes of logical $\bar{Z}$ faults. 

The fault distance of the fault complex is
\[d(F) = \min\{d^X(F), d^Z(F)\}\]
where $d^X(F) = \min_{f \in \ker\delta^2_F\backslash \im \delta^1_F}|f|$ and $d^Z(F) = \min_{f \in \ker(\delta^0_F)^\top\backslash \im (\delta^1_F)^\top}|f|$.

Given a CSS code $C$, we can make a fault complex describing the spacetime volume in which $C$ is initialised (in the $X$ or $Z$ basis), has syndrome extraction performed for $m$ rounds, and is then measured out (in the same basis). To do so, let 
\[M = \begin{pmatrix}
1 & 0 & 0 & \cdots & 0 \\
1 & 1 & 0 & \cdots & 0 \\
0 & 1 & 1 & \cdots & 0 \\
\vdots & \vdots & \vdots & \ddots & \vdots \\
0 & 0 & 0 & \cdots & 1
\end{pmatrix}:\F_2^{m}\rightarrow \F_2^{m+1}\]
be the incidence matrix of the path graph of length $m+1$, mapping from edges to vertices.

Then let
\[R = R^0 \rightarrow R^1\]
where we now distinguish between two cases, which we will track: (1) $\delta^0_R = M$ in order to initialise and measure out the spacetime volume in the $Z$ basis, and (2) $\delta^0_R = M^\top$ in order to initialise and measure out in the $X$ basis. \footnote{That this is true can be checked by expanding out and seeing that in cases (1) \& (2) there are $Z$ \& $X$ detectors at either end of the volume, respectively.} We test against both cases to ensure that there are no logical faults of either Pauli type which can enter the volume in the timelike direction and result in lower weight logical faults. In both cases, the spacetime volume is defined as,
\[R \otimes C = \begin{tikzcd}
	& {R^1\otimes C^0} & {R^1\otimes C^1} \\
	{R^0\otimes C^0} &&& {R^1\otimes C^2} \\
	& {R^0\otimes C^1} & {R^0\otimes C^2}
	\arrow[from=1-2, to=1-3]
	\arrow[from=1-3, to=2-4]
	\arrow[from=2-1, to=1-2]
	\arrow[from=2-1, to=3-2]
	\arrow[from=3-2, to=1-3]
	\arrow[from=3-2, to=3-3]
	\arrow[from=3-3, to=2-4]
\end{tikzcd}.\]

The equivalence classes of logical faults can be determined using the K{\"u}nneth formula~\cite{weibel1994introduction}, and the phenomenological fault distance can be computed using results on tensor products of (co)chain complexes~\cite{zeng2020minimal}. Logical faults can be generated as linear combinations of `spacelike' faults (arising from $C$) and `timelike' faults (arising from $R$), along with spacetime stabilisers. Let $m = d(C) + 3$. By~\cite[Thm.~17]{zeng2020minimal}, $d(R\otimes C) \geq d(C)$, in both cases. \footnote{Having $m \geq d(C)$ is sufficient for the distance bound, but we set $m = d(C)+3$ in order to have extra timesteps at the beginning and end, before and after the surgery procedure.} 

So far, we just have the idling volume. We will now construct, from that volume, the volume for the surgery. Given $\chi$ and $fg: \chi \rightarrow C$, we first acquire a fault complex $P \otimes \chi$, where $P = P^0 \rightarrow P^1$ has $\delta^0_P : \F_2^{d(C)} \rightarrow \F_2^{d(C)+1}$, the incidence matrix of the path graph, mapping from edges to vertices; the same as $M$ but with different dimensions.
That is, $P = P^0 \rightarrow P^1 = \F_2^{d(C)} \rightarrow \F_2^{d(C)+1}$. Then,
\[P \otimes \chi = {P^0\otimes \chi^1} \rightarrow {P^1\otimes \chi^1}\oplus {P^0\otimes \chi^2} \rightarrow {P^1\otimes \chi^2}.\]
This does not yet describe the ancilla system in spacetime; its degrees are incorrect, and it is not connected to the initial idling volume.

Define $\iota: P \rightarrow R$ as the coordinate injection with components:
\[\iota^0 = \begin{cases}
    \begin{pmatrix}
        0 & I_{d(C)} & 0 & 0
    \end{pmatrix}^\top, & \text{if case (1)}.\\
    \begin{pmatrix}
        0 & 0 & I_{d(C)} & 0 & 0
    \end{pmatrix}^\top, & \text{if case (2)}.
  \end{cases}\]
\[\iota^1 = \begin{cases}
    \begin{pmatrix}
        0 & I_{d(C)+1} & 0 & 0
    \end{pmatrix}^\top, & \text{if case (1)}.\\
    \begin{pmatrix}
        0 & I_{d(C)+1} & 0
    \end{pmatrix}^\top, & \text{if case (2)}.
  \end{cases}\]
That $\iota$ commutes as a cochain map, that is $\iota \delta_P = \delta_R\iota$, can be checked by matrix multiplication.

Then define the cochain map $\Upsilon = \iota \otimes fg : P \otimes \chi \rightarrow R\otimes C$. $\Upsilon$ satisfies the following commutative diagram:
\[\begin{tikzcd}
	&& {P^1\otimes \chi^1} & \\
	& {P^0\otimes \chi^1} && {P^1\otimes \chi^2} \\
	&& {P^0\otimes \chi^2} \\
	& {R^1\otimes C^0} & {R^1\otimes C^1} \\
	{R^0\otimes C^0} &&& {R^1\otimes C^2} \\
	& {R^0\otimes C^1} & {R^0\otimes C^2}
	\arrow[from=1-3, to=2-4]
	\arrow[shift left=2, curve={height=-30pt}, from=1-3, to=4-3]
	\arrow[from=2-2, to=1-3]
	\arrow[from=2-2, to=3-3]
	\arrow[shift right=2, curve={height=30pt}, from=2-2, to=6-2]
	\arrow[from=2-4, to=5-4]
	\arrow[from=3-3, to=2-4]
	\arrow[shift left=2, curve={height=-30pt}, from=3-3, to=6-3]
	\arrow[from=4-2, to=4-3]
	\arrow[from=4-3, to=5-4]
	\arrow[from=5-1, to=4-2]
	\arrow[from=5-1, to=6-2]
	\arrow[from=6-2, to=4-3]
	\arrow[from=6-2, to=6-3]
	\arrow[from=6-3, to=5-4]
\end{tikzcd}\]

Then, the spacetime volume describing the surgery is given by $\cone(\Upsilon)$:

\[\begin{tikzcd}
	& {P^1\otimes \chi^1} && \\
	{P^0\otimes \chi^1} && {P^1\otimes \chi^2} \\
	& {P^0\otimes \chi^2} \\
	& {R^1\otimes C^0} & {R^1\otimes C^1} \\
	{R^0\otimes C^0} &&& {R^1\otimes C^2} \\
	& {R^0\otimes C^1} & {R^0\otimes C^2}
	\arrow[from=1-2, to=2-3]
	\arrow[from=1-2, to=4-3]
	\arrow[from=2-1, to=1-2]
	\arrow[from=2-1, to=3-2]
	\arrow[from=2-1, to=6-2]
	\arrow[from=2-3, to=5-4]
	\arrow[from=3-2, to=2-3]
	\arrow[shift left=2, from=3-2, to=6-3]
	\arrow[from=4-2, to=4-3]
	\arrow[from=4-3, to=5-4]
	\arrow[from=5-1, to=4-2]
	\arrow[from=5-1, to=6-2]
	\arrow[from=6-2, to=4-3]
	\arrow[from=6-2, to=6-3]
	\arrow[from=6-3, to=5-4]
\end{tikzcd}\]

Observe that new qubits in the ancilla system are initialised in $\ket{0}$, and measured out in the $Z$-basis.
Equivalence classes of $\bar{X}$ logical faults are given by elements in $H^2(\cone(\Upsilon))$, and $\bar{Z}$ logical faults by elements in $H_1(\cone(\Upsilon))$.

Before analysing the faults, we make a modification to the fault complex: add a set of new $Z$ detectors $\Xi^4 \cong H^3(P\otimes \chi) = \mathrm{coker}(\delta^2_{P\otimes \chi})$ to $P\otimes \chi$, adjoining to make 
\[\Xi = P^0\otimes \chi^1 \rightarrow P^1\otimes \chi^1 \oplus P^0 \otimes \chi^2 \rightarrow P^1\otimes \chi^2 \rightarrow \Xi^4,\]
such that $H^3(\Xi) = 0$. The map $\delta^3_\Xi$ may not be sparse. Note that these are just \textit{detectors}, not checks, and so the quantum code remains sparse at all times; it is only the detector matrix which may become dense.

\[\begin{tikzcd}
	& {P^1\otimes \chi^1} && \\
	{P^0\otimes \chi^1} && {P^1\otimes \chi^2} & {\Xi^4} \\
	& {P^0\otimes \chi^2} \\
	& {R^1\otimes C^0} & {R^1\otimes C^1} \\
	{R^0\otimes C^0} &&& {R^1\otimes C^2} \\
	& {R^0\otimes C^1} & {R^0\otimes C^2}
	\arrow[from=1-2, to=2-3]
	\arrow[from=1-2, to=4-3]
	\arrow[from=2-1, to=1-2]
	\arrow[from=2-1, to=3-2]
	\arrow[from=2-1, to=6-2]
	\arrow[from=2-3, to=2-4]
	\arrow[from=2-3, to=5-4]
	\arrow[from=3-2, to=2-3]
	\arrow[shift left=2, from=3-2, to=6-3]
	\arrow[from=4-2, to=4-3]
	\arrow[from=4-3, to=5-4]
	\arrow[from=5-1, to=4-2]
	\arrow[from=5-1, to=6-2]
	\arrow[from=6-2, to=4-3]
	\arrow[from=6-2, to=6-3]
	\arrow[from=6-3, to=5-4]
\end{tikzcd}\]
Physically, these detectors come from the combination of $\ket{0}$ initialisations and $Z$-basis measurements of qubits in $\chi$ at the beginning and end of the surgery procedure, and have the form of products of `phantom' $Z$-checks in time.

To establish the equivalence classes of logical faults, we use the Snake lemma~\cite{ide2025fault,weibel1994introduction}, which implies:
\begin{align*}
    \dim H^i(\cone(\Upsilon)) = \dim H^i(R\otimes C) - \dim \im H^i(\Upsilon) + \dim H^{i+1}(\Xi) - \dim \im H^{i+1}(\Upsilon).
\end{align*}

\textbf{$\bar{Z}$ faults:}
We have 
\begin{align*}
    \dim H_1(\cone(\Upsilon)) &= \dim H^1(\cone(\Upsilon)) \\
    &= \dim H^1(R\otimes C) - \dim \im H^1(\Upsilon) + \dim H^{2}(\Xi) - \dim \im H^{2}(\Upsilon)\\
    &= \dim H^1(R\otimes C) - \dim \im H^1(\Upsilon) + \dim H^1(\chi) - \dim \im H^{2}(\Upsilon)
\end{align*}
where $\dim H^2(\Xi) = \dim H^0(P) \dim H^{2}(\chi) + \dim H^1(P)\dim H^1(\chi) = \dim H^1(\chi)$. The contribution of $\dim H^1(\chi)$ is due to new timelike logical faults, which can flip the measurement outcomes by flipping successive $X$ checks.

\begin{align}\label{eq:upsilon_hom}
    \dim \im H^{2}(\Upsilon) &= \dim \im H^0(\iota)\dim\im H^2(fg) + \dim \im H^1(\iota)\dim\im H^1(fg) \\
    &= \begin{cases}
\dim\im H^1(fg), & \text{if case (1)}.\\
0, &  \text{if case (2)}.
\end{cases}
\end{align}
The (negative) contribution of $\dim \im H^{2}(\Upsilon)$ dictates that any new timelike logical faults that flip the  outcomes of $\bar{X}$ logicals that are being measured are equivalent to spacelike $\bar{Z}$ logical errors in the original code. 

\begin{remark}
    In general we may have $\dim H^1(\chi) > \dim\im H^1(fg)$, which simply means that there are extra timelike logical faults through the auxiliary system which flip $X$ checks but do not flip any $\bar{X}$ logical measurement outcomes: if there is a set of $X$ checks in $\chi^1$ which are mapped by $f^1g^1$ to a representative of $[0]$ in $H^1(C)$ then they do not contribute to any logical measurements. These are harmless, and the faults must have weight at least $d(C)$ to be undetected by detectors regardless.
\end{remark}

The reason why there is no contribution in case (2) is because, in that case, the code was initialised in the $\ket{+}$ state, and so all $\bar{X}$ logicals being measured are actually spacetime stabilisers.

For the final contributions, $\dim H^1(R\otimes C)$ comes from the original idling volume, and
\begin{align*}
    \dim \im H^{1}(\Upsilon) &= \dim \im H^0(\iota)\dim\im H^1(fg)
 = 0.
\end{align*}
Therefore the only new contributions are those of timelike errors, which extend from the beginning to the end of the ancilla system. As the surgery is performed for $d(C)$ rounds, i.e. $P^0 \rightarrow P^1 = \F_2^{d(C)} \rightarrow \F_2^{d(C)+1}$, the weight of these logical faults must be at least $d(C)$, even if they are cleaned by spacetime stabilisers or multiplied by other logical faults. Then, by~\cite[Thm.~1]{ide2025fault}, the existing nontrivial spacelike logicals in $H_1(R\otimes C)$ must have weight at least $d(C)$ even after cleaning into the ancilla system, so we are done.

\textbf{$\bar{X}$ faults:}
\begin{align*}
\dim H^2(\cone(\Upsilon)) &= \dim H^2(R\otimes C) - \dim \im H^2(\Upsilon) + \dim H^3(\Xi) - \dim \im H^{3}(\Upsilon) \\
&= \dim H^2(R\otimes C) - \dim \im H^2(\Upsilon) \\
&= \begin{cases}
 \dim H^2(R\otimes C) - \dim\im H^1(fg), & \text{if case (1)}.\\
\dim H^2(R\otimes C), &  \text{if case (2)}.
\end{cases} 
\end{align*}
where the last equality is due to Eq.~\ref{eq:upsilon_hom}.

Therefore the only equivalence classes of $\bar{X}$ logical faults in the spacetime volume $\cone(\Upsilon)$ are those present in the original idling volume, minus those which have been converted into stabilisers and thus measured out by the surgery procedure. In case (2) there are no spacelike $\bar{X}$ logicals in the original idling volume, so there is no negative contribution.

As $\chi$ has relative expansion $\beta_\tau = 1$ at degree 1 with $\tau = d(C)$, and $f^1g^1$ has column weight at most 1, no unmeasured spacelike $\bar{X}$ logicals can have their weight reduced below $d(C)$ by cleaning, following an identical argument to Lem.~\ref{lem:distance_preservation}, so we are done. \footnote{The only difference between this argument and Lem.~\ref{lem:distance_preservation} is that there, the fact that no new logicals were introduced came from injectivity of $H^2(fg)$; here it comes from the detectors $\Xi^4$.}

\endproof

\section{An improved linear PCPP}\label{app:improved_PCPP}

In this appendix, we outline a slightly better linear PCPP. Namely, we give a proof sketch for the following:

\begin{restatable}{lemma}{betterPCPP}\label{lem:better_PCPP}
    Let $\varphi : \F_2^m \rightarrow \F_2^p$ be a linear circuit of size $n$ over $\F_2$.
    Then there exists a linear PCPP verifier $V$ which takes as input $(\F_2, \varphi)$, with the following parameters:
    \begin{itemize}
        \item $r(n) = \log n + \mathcal{O}(\log \log n)$,
        \item $\ell(n) = n(\log n)^{\mathcal{O}(1)}$,
        \item $d(n) = (\log n)^{\mathcal{O}(1)}$,
        \item $\rho(n) = \Omega(\frac{1}{\log n})$.
    \end{itemize}
\end{restatable}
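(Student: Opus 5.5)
The plan is to rebuild the pipeline behind Thm.~\ref{thm:pcpp} (see App.~\ref{app:pcpp}) with two changes. First, work directly over $\F_2$ and move to an extension field only inside the algebraic component. Second, stop the recursive composition after a constant number of steps, so we never compose $\log\log n$ times. The $(\log n)^{\mathcal{O}(\log\log n)}$ blow-up in Thm.~\ref{thm:pcpp} comes entirely from composing $\log\log n$ times, and each composition costs a polylogarithmic factor. If we accept polylogarithmic decision complexity $d(n)=(\log n)^{\mathcal{O}(1)}$, a constant number of compositions is enough. That gives proof length $n(\log n)^{\mathcal{O}(1)}$ and randomness $\log n+\mathcal{O}(\log\log n)$, since each composition adds only $\mathcal{O}(\log\log n)$ random bits.

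The steps are as follows. (i) Embed $\F_2$ into $K=\F_{2^k}$ with $k=\lceil\log n\rceil$, exactly as in Sec.~\ref{sec:circuit_construction}. By Lemma~\ref{lem:kernel_eq}, $\SAT_K(\varphi)=\SAT_{\F_2}(\varphi)\otimes K$.

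(ii) Replace the $\sqrt{n}$-query linear PCPP of \cite[Thm.~5.1]{meir2012combinatorial} with a quasilinear-length linear PCPP of Ben-Sasson--Sudan type. Such a construction arithmetises the circuit over a subspace (additive) polynomial framework in characteristic $2$, uses Reed--Solomon proximity testing, and has proof length $n(\log n)^{\mathcal{O}(1)}$, query complexity $(\log n)^{\mathcal{O}(1)}$, and rejection ratio $\Omega(1/\log n)$. All of its tests are $K$-linear, because the circuit constraints are linear and RS proximity testing is linear. This is where the rejection ratio $\rho=\Omega(1/\log n)$ comes from: the FRI/BS-style recursion loses a constant factor per level over $\mathcal{O}(\log n)$ levels, or equivalently the rejection probability is normalised by a $\log n$ depth. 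The randomness is $\log n+\mathcal{O}(\log\log n)$, provided we derandomise the line and point choices using small-bias or subspace sampling.

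(iii) Restrict scalars back to $\F_2$ as in Lemma~\ref{eq:binary_soundness}. Each $K$-linear test on $(\log n)^{\mathcal{O}(1)}$ field elements becomes a binary linear circuit of size $(\log n)^{\mathcal{O}(1)}\cdot k^2$, which is still polylogarithmic. Soundness passes to the $\F_2$-part by the argument of Lemma~\ref{eq:binary_soundness}: the distance of the first coordinate is dominated by the $K$-distance. The proof string, now viewed over $\F_2$, grows by a factor $k$, which the $(\log n)^{\mathcal{O}(1)}$ absorbs.

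The main obstacle is step (ii): certifying that a quasilinear PCPP with these exact randomness and rejection parameters exists \emph{and is linear}, in the sense of Def.~\ref{def:linear_pcpp}, with proximity to the implicit input rather than just satisfiability. I would first try to route the input through an RS encoding plus a consistency test that compares random input coordinates with the encoded polynomial. This gives proximity with a constant-factor loss, and linearity of all steps can then be checked by inspection, as in \cite[Sec.~5]{meir2012combinatorial}. If the randomness turns out to exceed $\log n+\mathcal{O}(\log\log n)$, the fallback is one round of robustisation (\cite[Thm.~4.7]{meir2012combinatorial}) followed by a single composition (\cite[Thm.~2.11]{meir2012combinatorial}). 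That keeps the overheads polylogarithmic.
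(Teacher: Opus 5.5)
\textbf{Gap: step (ii) is assumed, not proved.} Step (ii) is the entire content of the lemma, and you assert it rather than prove it; you flag it yourself as the main obstacle. Your justification is that the tests are $K$-linear because the circuit constraints and RS proximity tests are linear. That cannot be read off the Ben-Sasson--Sudan construction. That verifier is stated for Boolean circuits, and its decision predicates are arbitrary Boolean functions of the $(\log n)^{\mathcal{O}(1)}$ queried bits; the paper records decision complexity $d_0 = n^{\mathcal{O}(1)}$. Re-deriving it for $K$-linear circuits would be a new construction: you would need completeness on all of $\SAT_K(\varphi)$, proximity soundness, and $K$-linearity of every test. For instance, constraints of the form $y^2+y=0$, which are natural in characteristic-$2$ arithmetisations, are $\F_2$-linear but not $K$-linear.

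The parameter claims in (ii) are also not derived. A constant-factor loss over $\mathcal{O}(\log n)$ levels would give $\rho = n^{-\Omega(1)}$, not $\Omega(1/\log n)$. Your fallback does not close the gap either. Composition with Meir's components cannot make a nonlinear verifier linear. Moreover, starting from decision complexity $\tilde{\mathcal{O}}(\sqrt n)$, each composition only takes a square root, so a constant number of compositions still leaves $n^{\Omega(1)}$. That is why Thm.~\ref{thm:pcpp} needs $\log\log n$ compositions, and why the claim in your opening paragraph fails within that pipeline.

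\textbf{The missing idea: only the honest proofs need to be linear, not the tests.} The paper takes the Boolean verifier of \cite[Thm.~2.10]{bensasson2008short} as a black box. It is already a PCP of proximity, with $r_0 = \log n + \mathcal{O}(\log\log n)$, $\ell_0 = n(\log n)^{\mathcal{O}(1)}$, query complexity $q_0 = (\log n)^{\mathcal{O}(1)}$ and $\rho_0 = \Omega(1/\log n)$, which settles your proximity concern. The paper then uses the observation of \cite[Sec.~8]{bensasson2009sound} that, for an $\F_2$-linear property, the honest proof can be chosen as $f(x)$ for an $\F_2$-linear map $f$.

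Each test's predicate is then replaced by membership of the queried view in the subspace $S_\omega = \{(x\circ f(x))_{\restriction I_\omega} : x \in \ker\varphi\}$. This works for three reasons:
\begin{itemize}
\item By perfect completeness of the original verifier, $S_\omega$ lies inside the old accepting set, so the new verifier rejects at least as often and soundness is kept.
\item Every $x\circ f(x)$ with $x\in\ker\varphi$ still passes, so completeness is kept.
\item A subspace-membership test on $q_0$ bits is a bounded fan-in/fan-out linear circuit of size $\mathcal{O}(q_0^2) = (\log n)^{\mathcal{O}(1)}$.
\end{itemize}
This works directly over $\F_2$. Your field extension and restriction of scalars in steps (i) and (iii) are harmless but unnecessary.
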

We believe that this lemma is already known, and is \textit{almost} stated verbatim in~\cite[Lem.~8.1]{bensasson2009sound}, but with some minor variation.
This lemma has the immediate consequence that our Lem.~\ref{lem:sparse_relative_expander} can be improved, yielding:

\begin{restatable}{lemma}{bettersparseexpander}
\label{lem:better_sparse_expander}
Let $A = \F_2^\mu \rightarrow \F_2^\nu$ be a length-1 cochain complex over $\F_2$ such that $\delta^1_A: \F_2^\mu \rightarrow \F_2^\nu$ is LDPC and $\nu \in \mathcal{O}(\mu)$.

    Then there exists a length-1 cochain complex $\chi = \chi^1 \rightarrow \chi^2$ over $\F_2$, along with cochain map $g: \chi \rightarrow A$, such that:
    \begin{enumerate}
        \item $g$ has row and column weights at most 1,
        \item $\dim \chi^1 + \dim \chi^2 = \mu(\log \mu)^{\mathcal{O}(1)}$,
        \item $\delta^1_\chi : \chi^1 \rightarrow \chi^2$ is LDPC,
        \item $g^1(\ker \delta^1_\chi) = \ker \delta^1_A $,
        \item $|\delta^1_\chi x| \geq \min_{z \in \ker\delta^1_\chi} |g^1(x-z)|$, $\forall x \in \chi^1$.
    \end{enumerate}
    Moreover, $\chi$ can be constructed in time polynomial in $\mu$.
\end{restatable}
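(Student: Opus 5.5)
The approach is to rerun the proof of Lemma~\ref{lem:sparse_relative_expander} step for step, replacing Thm.~\ref{thm:pcpp} with the PCPP of Lemma~\ref{lem:better_PCPP}. We then track how the changed parameters propagate to the size of $\chi$.

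The simplification comes first. Lemma~\ref{lem:better_PCPP} works directly over $\F_2$, so the field extension to $K=\F_{2^k}$, Lemma~\ref{lem:kernel_eq}, and the restriction of scalars (Lemma~\ref{eq:binary_soundness}) are all unnecessary. We convert the LDPC matrix $\delta^1_A$ into a linear circuit $\mathcal{A}$ over $\F_2$ of size $n=\mathcal{O}(\mu)$, with $\SAT(\mathcal{A})=\ker\delta^1_A$, and apply Lemma~\ref{lem:better_PCPP}. This gives
\[T=2^{r}=\mu(\log\mu)^{\mathcal{O}(1)},\qquad \ell=\mu(\log\mu)^{\mathcal{O}(1)}.\]
Next we apply the matrix realisation Lemma~\ref{lem:matrix_realisation}. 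Each $\psi_\omega$ has size $d=(\log\mu)^{\mathcal{O}(1)}$ and fan-in/fan-out at most 2, so $G_\omega$ has $\mathcal{O}(d)$ rows and columns and constant row weight. Stacking the $T$ matrices gives $\hat D$ of size $\mathcal{O}(Td)=\mu(\log\mu)^{\mathcal{O}(1)}$, satisfying
\[|\hat D(a,\pi,g)|\ge \frac{\rho T}{\mu}\dist(a,\ker\delta^1_A).\]
We then repeat each circuit $L=\lceil \mu/(\rho T)\rceil$ times with fresh internal wires, so that the soundness constant becomes $1$. Since $\rho=\Omega(1/\log\mu)$, this costs at most
\[LT\le \mu/\rho+T=\mu(\log\mu)^{\mathcal{O}(1)}\]
circuits, each of size $d$. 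The total is still $\mu(\log\mu)^{\mathcal{O}(1)}$. The resulting $D_0$ is row-sparse but not column-sparse.

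We then apply Lemma~\ref{lem:degree_reduction} with $\beta_0=1$ and $\mathcal{S}=\ker\delta^1_A$. This yields a $D$ with bounded row and column weights, satisfying $|D(a,p)|\ge\dist(a,\ker\delta^1_A)$ and completeness. Its size is
\[\mathcal{O}\bigl(\nnz(D_0)+\mu+\dim P_0\bigr)=\mu(\log\mu)^{\mathcal{O}(1)},\]
because $D_0$ has constant row weight and $\mu(\log\mu)^{\mathcal{O}(1)}$ rows. Finally, we form $\chi$ exactly as in Sec.~\ref{sec:the_cone}, with $\delta^1_\chi(a,p)=(\delta^1_A a, D(a,p))$ and $g$ the projection. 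Properties 1, 3, 4 and 5 follow verbatim from that proof, and property 2 follows from the size count above.

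Runtime is polynomial in $\mu$. The verifier is polynomial time, enumerating the $2^r$ values of $\omega$ is quasi-linear, and expanders are efficiently constructible.

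The main obstacle is Lemma~\ref{lem:better_PCPP} itself, which is only sketched. Given that lemma, the one remaining point to check carefully is that Lemma~\ref{lem:matrix_realisation} still applies when the decision complexity is polylogarithmic rather than constant. The rows stay constant-weight because of the fan-in bound, and the $\mathcal{O}(d)$ blow-up per circuit is absorbed into the $(\log\mu)^{\mathcal{O}(1)}$ factor. I would also verify that the polylogarithmic $1/\rho$ repetition factor and the sparsification overhead compose multiplicatively into a single $(\log\mu)^{\mathcal{O}(1)}$.
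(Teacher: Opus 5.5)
Your proposal is correct and follows the paper's own proof. Both drop the field extension and restriction of scalars, realise the $\F_2$ tester circuits of Lemma~\ref{lem:better_PCPP} as a row-sparse matrix via Lemma~\ref{lem:matrix_realisation} with the $L$-fold repetition, sparsify the columns with Lemma~\ref{lem:degree_reduction}, and insert the result into the complex of Sec.~\ref{sec:the_cone}; like the paper, your argument is contingent on the sketched Lemma~\ref{lem:better_PCPP}, and your size count $\mathcal{O}(LTd)=\mu(\log\mu)^{\mathcal{O}(1)}$ matches the paper's.
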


First, we prove Lem.~\ref{lem:better_sparse_expander}, contingent on Lem.~\ref{lem:better_PCPP}.

\proof
We retrace the steps of Sec.~\ref{sec:lift}, but without having to do field extension or restriction. 
\begin{itemize}
    \item First, convert $\delta^1_A$ into an $\F_2$-linear circuit $\mathcal{A}$.
    \item Then, from the verifier $V$ of Lem.~\ref{lem:better_PCPP}, use Lem.~\ref{lem:matrix_realisation} to acquire a matrix $\hat{D}$, which realises the tester circuits and has sufficient relative expansion, following the proceeding arguments to boost the relative expansion to at least 1. $\hat{D}$ has $\mu(\log \mu)^{\mathcal{O}(1)}$ rows and columns; its columns may be dense, however.
    \item Then reduce the column weight by Lem.~\ref{lem:degree_reduction}, applied directly to $\hat{D}$ (skipping the field restriction) to acquire the sparsified matrix $D$. In our case, the size of $D$ is $\mathcal{O}\bigl(\nnz(\hat{D})+\mu + \dim \hat{P} \bigr) = \mu(\log \mu)^{\mathcal{O}(1)}$.
    \item Lastly, insert $D$ into the cochain complex in Sec.~\ref{sec:the_cone}.
\end{itemize}
\endproof

We now give a proof sketch of Lem.~\ref{lem:better_PCPP}.

\proofsketch
First, given $\varphi$, we will construct a Boolean PCPP (as opposed to a linear one). Ref.~\cite[Thm.~2.10]{bensasson2008short} gives a Boolean PCPP with nearly the same parameters as Lem.~\ref{lem:better_PCPP}. Explicitly, its parameters are,

\begin{itemize}
    \item $r_0 = \log n + \mathcal{O}(\log \log n)$,
    \item $\ell_0 = n (\log n)^{\mathcal{O}(1)}$,
    \item $q_0 = (\log n)^{\mathcal{O}(1)}$,
    \item $d_0 = n^{\mathcal{O}(1)}$,
    \item $\rho_0 = \Omega(\frac{1}{\log n})$.
\end{itemize}
The distinctions are that: (a) the tester circuits are Boolean circuits, not linear ones, and (b) the decision complexity is polynomial, not constant. Note that $q_0$ is the query complexity, i.e. the number of input bits to each tester circuit.

We can follow the observations of Ref.~\cite[Sec.~8]{bensasson2009sound}, which show that when the property being tested is $\F_2$-linear, the proof can be chosen as a linear function of the tested word: that is, there is an $\F_2$-linear map $f:\F_2^m \rightarrow \F_2^{\ell_0}$, where $m$ is the implicit input length, such that if $x \in \ker \varphi$ then $x \circ f(x)$ is always accepted by the tester. We claim that one can then replace each predicate in each Boolean tester circuit with a linear one, while preserving the completeness and soundness of the verifier. Some words in $\F_2^m\oplus \F_2^{\ell_0}$ which would have been accepted by the Boolean tester circuits are no longer accepted, but $x \circ f(x)$ is always accepted and so the completeness is preserved. No new words are accepted, and so the soundness is preserved. This linear PCPP now has decision complexity $d_1  = q_0^{\mathcal{O}(1)} = (\log n)^{\mathcal{O}(1)}$, by taking the polynomially large linear circuits, converting them into dense matrices which perform the same linear map up to row operations, preserving the kernel, then turning them back into linear circuits in the same manner as in Sec.~\ref{sec:circuit_construction}: large XORs are replaced by binary trees, and the same for fan-outs.
\endproofsketch

\end{document}